\title{Homotopic curve shortening and the affine curve-shortening flow\footnote{A preliminary version of this article appeared in \emph{Proc. 36th Int. Symp. Computational Geometry (SoCG'20)}, article 12, 15 pages, 2020.}}
\date{}
\author{Sergey Avvakumov\thanks{\texttt{savvakumov@gmail.com}. Department of Mathematical Sciences, University of Copenhagen, Copenhagen, Denmark. Has received funding from the Austrian Science Fund (FWF), Project P31312-N35, and the European Research Council under the European Union's Seventh Framework Programme ERC Grant agreement ERC StG 716424 - CASe.} \and Gabriel Nivasch\thanks{ \texttt{gabrieln@ariel.ac.il}. Ariel University, Ariel, Israel.}}
\newcommand{\R}{\mathbb R}
\newcommand{\Z}{\mathbb Z}
\newcommand{\eps}{\varepsilon}
\newcommand{\So}{\mathbb S^1}
\newcommand{\infl}{\imath}                   %number of inflection edges
\newcommand{\cros}{\chi}                      %number of crossings
\newcommand{\tP}{t_P}                         %P-curve corresponding to given type-1 or type-3 curve
\newcommand{\tOne}{t_1}                         %type-1 curve corresponding to given type-3 curve
\newcommand{\cg}{c_{\mathrm g}}               %constant of proportionality for grid-obstacle conjecture
\newcommand{\crand}{c_{\mathrm r}}               %constant of proportionality for random-obstacle conjecture
\DeclareMathOperator{\HCS}{\mathrm HCS}
\DeclareMathOperator{\snap}{\mathrm snap}
\newtheorem{theorem}{Theorem}
\newtheorem{lemma}[theorem]{Lemma}
\newtheorem{observation}[theorem]{Observation}
\newtheorem{conjecture}[theorem]{Conjecture}
\newtheorem*{rep@theorem}{\rep@title}
\newcommand{\newreptheorem}[2]{%
\newenvironment{rep#1}[1]{%
 \def\rep@title{#2 \ref{##1}}%
 \begin{rep@theorem}}%
 {\end{rep@theorem}}}
\begin{document}
\maketitle

\begin{abstract}
We define and study a discrete process that generalizes the convex-layer decomposition of a planar point set. Our process, which we call \emph{homotopic curve shortening} (HCS), starts with a closed curve (which might self-intersect) in the presence of a set $P\subset \R^2$ of point obstacles, and evolves in discrete steps, where each step consists of (1) taking shortcuts around the obstacles, and (2) reducing the curve to its shortest homotopic equivalent.

We find experimentally that, if the initial curve is held fixed and $P$ is chosen to be either a very fine regular grid or a uniformly random point set, then HCS behaves at the limit like the affine curve-shortening flow (ACSF). This connection between ACSF and HCS generalizes the link between ACSF and convex-layer decomposition (Eppstein et al., 2017; Calder and Smart, 2020), which is restricted to convex curves.

We prove that HCS satisfies some properties analogous to those of ACSF: HCS is invariant under affine transformations, preserves convexity, and does not increase the total absolute curvature. Furthermore, the number of self-intersections of a curve, or intersections between two curves (appropriately defined), does not increase. Finally, if the initial curve is simple, then the number of inflection points (appropriately defined) does not increase.
\end{abstract}

\section{Introduction}

Let $\So$ be the unit circle. In this paper we call a continuous function $\gamma:[0,1]\to\R^2$ a \emph{path}, and a continuous function $\gamma:\So\to\R^2$ a \emph{closed curve}, or simply a \emph{curve}. If $\gamma$ is injective then the curve or path is said to be \emph{simple}. We say that two paths or curves $\gamma$, $\delta$ are \emph{$\eps$-close} to each other if their Fr\'echet distance is at most $\eps$, i.e.~if they can be re-parametrized such that for every $t$, the Euclidean distance between the points $\gamma(t)$, $\delta(t)$ is at most $\eps$.

\subsection{Shortest homotopic curves}

Let $P$ be a finite set of points in the plane, which we regard as obstacles. Two curves $\gamma, \delta$ that avoid $P$ are said to be \emph{homotopic} if there exists a way to continuously transform $\gamma$ into $\delta$ while avoiding $P$ at all times. And two paths $\gamma, \delta$ that avoid $P$ (except possibly at the endpoints) and satisfy $\gamma(0)=\delta(0)$, $\gamma(1)=\delta(1)$ are said to be \emph{homotopic} if there exists a way to continuously transform $\gamma$ into $\delta$, without moving their endpoints, while avoiding $P$ at all times (except possibly at the endpoints). We extend these definitions to the case where $\gamma$ avoids obstacles but $\delta$ does not, by requiring the continuous transformation of $\gamma$ into $\delta$ to avoid obstacles at all times except possibly at the last moment.

For every curve (resp.~path) $\gamma$ in the presence of obstacles there exists a unique shortest curve (resp.~path) $\delta$ that is homotopic to $\gamma$. The problem of computing the shortest path or curve homotopic to a given piecewise-linear path or curve, under the presence of polygonal or point obstacles, has been studied extensively. A simple and efficient algorithm for this task is the so-called ``funnel algorithm'' \cite{chazelle82,hs94,lp84,lm85}. See also~\cite{bespa03,Cabello2004,ekl06}.

\subsection{The affine curve-shortening flow}

In the \emph{affine curve-shortening flow}, a smooth curve
$\gamma\subset \R^2$ varies with time in the following way. At each
moment in time, each point of $\gamma$ moves perpendicularly to the
curve, towards its local center of curvature, with instantaneous
velocity $r^{-1/3}$, where $r$ is that point's radius of curvature at
that time. See Figure~\ref{fig:A:C:S:F}.

\begin{figure}
\centerline{\includegraphics[width=3in]{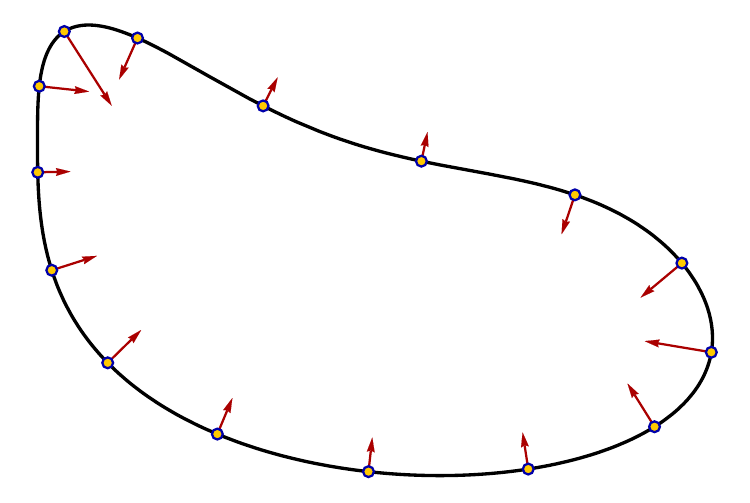}}
    \caption{Affine curve-shortening flow. The arrows indicate the instantaneous velocity of different points along the curve at the shown time moment.}
    \label{fig:A:C:S:F}%
\end{figure}

The ACSF was first studied by Alvarez et al.~\cite{aglm-afeip-93} and
Sapiro and Tannenbaum~\cite{st-aiss-93}.  It differs from the more
usual \emph{curve-shortening flow} (CSF)~\cite{c-gceip-03,cz-csp-01},
in which each point is given instantaneous velocity $r^{-1}$.

Unlike the CSF, the ACSF is invariant under affine transformations:
Applying an affine transformation to a curve, and then performing the
ACSF, gives the same results (after rescaling the time parameter
appropriately) as performing the ACSF and then applying the affine
transformation to the shortened curves. Moreover, if the affine
transformation preserves area, then the time scale is unaffected.

The ACSF was originally applied in computer vision, as a way of
smoothing object boundaries~\cite{c-gceip-03} and of computing shape
descriptors that are insensitive to the distortions caused by changes
of viewpoint.

\paragraph{Properties of the CSF and ACSF for simple curves.} Under either the CSF or the ACSF, a simple curve remains simple, and its length decreases strictly with time (\cite{cz-csp-01}, \cite{st-aiss-93}, resp.). Furthermore, a pair of disjoint curves, run simultaneously, remain disjoint at all times (\cite{white02}, \cite{ast-ahenc-98}, resp.). More generally, the number of intersections between two curves never increases (\cite{ang_blowup91}, \cite{ast-ahenc-98}, resp.).

Given a smooth closed curve $\gamma:[0,1]\to \R^2$, let $\alpha(s)\in\So$ be the unit vector tangent to $\gamma(s)$ for each $s\in[0,1]$. Then the \emph{total absolute curvature} of $\gamma$ is the total distance traversed by $\alpha(s)$ in $\So$ as $s$ goes from $0$ to $1$. If $\gamma$ is convex then its total absolute curvature is exactly $2\pi$; otherwise, it is larger than $2\pi$. Under either the CSF or the ACSF, the total absolute curvature of a curve decreases strictly with time and tends to $2\pi$ (\cite{gage1986,grayson1987}, \cite{ast-ahenc-98}, resp.). Moreover, the number of inflection points of a simple curve does not increase with time (\cite{ang_blowup91}, \cite{ast-ahenc-98}, resp.).

Under the CSF, a simple curve eventually becomes convex and then converges to a circle as it collapses to a point~\cite{gage1986,grayson1987}. Correspondingly, under the ACSF, a simple curve becomes convex and then converges to an ellipse as it collapses to a point~\cite{ast-ahenc-98}.

\paragraph{Self-intersecting curves.} When the initial curve is not simple, a self-intersection might collapse and form a cusp with infinite curvature. For the CSF, it has been shown that, as long as the initial curve satisfies some natural conditions, it is possible with some care to define the flow past the singularity~\cite{altschuler1992,ang_blowup91}. Angenent~\cite{ang_blowup91} generalized these results to a wide range of flows, but unfortunately the ACSF is not included in this range~\cite{ast-ahenc-98}. Hence, no rigorous results have been obtained for self-intersecting curves under the ACSF. Still, ACSF computer simulations can be run on curves that have self-intersections or singularities with little difficulty.

\subsection{Relation to the convex-layer decomposition}

Let $P$ be a finite set of points in the plane. The \emph{convex-layer decomposition} (also called the \emph{onion decomposition}) of $P$ is the partition of $P$ into sets $P_1, P_2, P_3, \ldots$ obtained as follows: Let $Q_0 = P$. Then, for each $i\ge 1$ for which $Q_{i-1}\neq \emptyset$, let $P_i$ be the set of vertices of the convex hull of $Q_{i-1}$, and let $Q_i = Q_{i-1} \setminus P_i$. In other words, we repeatedly remove from $P$ the set of vertices of its convex hull. See~\cite{b-omd-76,c-clps-85,d-co-04,e-chp-82}.

Eppstein et al.~\cite{gp_acsf}, following Har-Peled and Lidick\'y~\cite{hl-pg-13}, studied \emph{grid peeling}, which is the convex-layer decomposition of subsets of the integer grid $\Z^2$. Eppstein et al.~found an experimental connection between ACSF for convex curves and grid peeling. Specifically, let $\gamma$ be a fixed convex curve. Let $n$ be large, let $(\Z/n)^2$ be the uniform grid with spacing $1/n$, and let $P_n(\gamma)$ be the set of points of $(\Z/n)^2$ that are contained in the region bounded by $\gamma$. Then, as $n\to\infty$, the convex-layer decomposition of $P_n(\gamma)$ seems experimentally to converge to the ACSF evolution of $\gamma$, after the time scale is adjusted appropriately. They formulated this connection precisely in the form of a conjecture. They also raised the question whether there is a way to generalize the grid peeling process so as to approximate ACSF for non-convex curves as well.

Dalal~\cite{d-co-04} studied the convex-layer decomposition of point sets chosen uniformly and independently at random from a fixed convex domain, in the plane as well as in $\R^d$. Recently, Calder and Smart~\cite{caldersmart} proved that the above-described correspondence between ACSF and the convex-layer decomposition holds if, instead of a uniform grid, one uses a random point set sampled uniformly and independently within the region bounded by the convex curve $\gamma$.

\subsection{Our Contribution}

In this paper we describe a generalization of the convex-layer decomposition to non-convex, and even non-simple, curves. We call our process \emph{homotopic curve shortening}, or HCS. Under HCS, an initial curve evolves in discrete steps in the presence of point obstacles. We find that, if the obstacles form a uniform grid, then HCS shares the same experimental connection to ACSF that grid peeling does. Hence, HCS is the desired generalization sought by Eppstein et al.~\cite{gp_acsf}. We also find that the same experimental connection between ACSF and HCS (and in particular, between ACSF and the convex-layer decomposition) holds when the obstacles are distributed uniformly at random, with the sole difference being in the constant of proportionality.

Although the experimental connection between HCS and ACSF seems hard to prove, we do prove that HCS satisfies some simple properties analogous to those of ACSF: HCS is invariant under affine transformations, preserves convexity, and does not increase the total absolute curvature. Furthermore, the number of self-intersections of a curve, or intersections between two curves (appropriately defined), does not increase. Finally, if the initial curve is simple, then the number of inflection points (appropriately defined) does not increase.

\paragraph{Organization of This Paper.} In Section~\ref{sec_HCS} we describe homotopic curve shortening (HCS), our generalization of the convex-layer decomposition. In Section~\ref{sec_experiments} we present our conjectured connection between ACSF and HCS, as well as experimental evidence supporting this connection. In Section~\ref{sec_properties} we state our theoretical results, to the effect that HCS satisfies some properties analogous to those of ACSF. In Section~\ref{sec_proofs} we prove the results stated in Section~\ref{sec_properties}. Appendices~\ref{app_unique}--\ref{app_implementation} include proofs of some known results for the sake of completeness, as well as implementation details of our experiments.

\section{Homotopic curve shortening}\label{sec_HCS}

Let $P$ be a finite set of obstacle points. A \emph{$P$-curve} (resp.~\emph{$P$-path}) is a curve (resp.~path) that is composed of straight-line segments, where each segment starts and ends at obstacle points.

\emph{Homotopic curve shortening} (HCS) is a discrete process that starts with an initial $P$-curve $\gamma_0$ (which might self-intersect), and at each step, the current $P$-curve $\gamma_n$ is turned into a new $P$-curve $\gamma_{n+1} = \HCS_P(\gamma_n)$.

The definition of $\gamma'=\HCS_P(\gamma)$ for a given $P$-curve $\gamma$ is as follows. Let $(p_0, \ldots, p_{m-1})$ be the circular list of obstacle points visited by $\gamma$. Call $p_i$ \emph{nailed} if $\gamma$ goes straight through $p_i$, i.e.~if $\angle p_{i-1} p_i p_{i+1} = \pi$.\footnote{All indices in circular sequences are modulo the length of the sequence.} Let $(q_0, \ldots, q_{k-1})$ be the circular list of nailed vertices of $\gamma$. Suppose first that $k\ge 1$. Then $\gamma'$ is obtained through the following three substeps:
\begin{enumerate}
\item \emph{Splitting.} We split $\gamma$ into $k$ $P$-paths $\delta_0, \ldots, \delta_{k-1}$ at the nailed vertices, where each  $\delta_i$ goes from $q_i$ to $q_{i+1}$.
\item \emph{Shortcutting.} For each non-endpoint vertex $p_i$ of each $\delta_i$, we make the curve avoid $p_i$ by taking a small shortcut. Specifically, let $\eps>0$ be sufficiently small, and for each $i$ let $C_{p_i}$ be a circle of radius $\eps$ centered at $p_i$. Let $e_i$ be the segment $p_{i-1}p_i$ of $\delta_i$. Let $x_i=e_i\cap C_{p_i}$ and $y_i=e_{i+1}\cap C_{p_i}$. Then we make the path go straight from $x_i$ to $y_i$ instead of through $p_i$. Call the resulting path $\rho_i$, and let $\rho$ be the curve obtained by concatenating all the paths $\rho_i$.
\item \emph{Shortening.} Each $\rho_i$ in $\rho$ is replaced by the shortest $P$-path homotopic to it. The resulting curve is $\gamma'$.
\end{enumerate}
If $\gamma$ has no nailed vertices ($k=0$) then $\gamma'$ is obtained by performing the shortcutting and shortening steps on the single closed curve $\gamma$. Note that if $P$ is in general position (no three points on a line) then there will never be nailed vertices. Figure~\ref{fig_HCS_step} illustrates one HCS step on a sample curve.

\begin{figure}
\centerline{\includegraphics{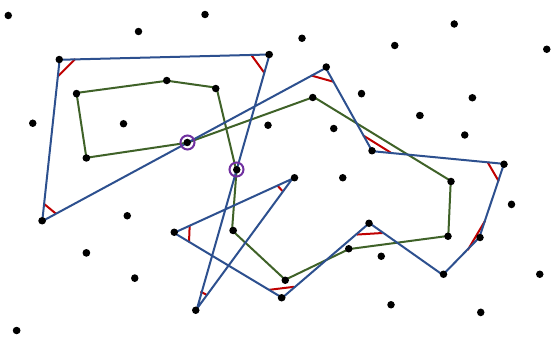}}
\caption{\label{fig_HCS_step}Computation of a single step of homotopic curve shortening: Given a $P$-curve $\gamma$ (blue), we first identify its nailed vertices (purple). In this case, the two nailed vertices split $\gamma$ into two paths $\delta_0$, $\delta_1$. In each $\delta_i$ we take a small shortcut around each intermediate vertex (red). Then we replace each $\delta_i$ by the shortest path homotopic to it, obtaining the new $P$-curve $\gamma'=\HCS_P(\gamma)$ (green).}
\end{figure}

The process terminates when the curve collapses to a point. This will certainly happen after a finite number of steps, since at each step the curve gets strictly shorter, and there is a finite number of distinct $P$-curves of at most a certain length.

\paragraph{HCS for convex curves.} If the initial curve $\gamma_0$ is the boundary of the convex hull of $P$, then the HCS evolution of $\gamma_0$ is equivalent to the convex-layer decomposition of $P$. Namely, for every $i\ge 0$, the curve $\gamma_i$ is the boundary of a convex polygon, and the set of vertices of this polygon equals the $(i+1)$-st convex layer of $P$. See Section~\ref{sec_properties} below.

\section{Experimental connection between ACSF and HCS}\label{sec_experiments}

Our experiments show that HCS, using $P=(\Z/n)^2$ as the obstacle set, approximates ACSF at the limit as $n\to\infty$, just as grid peeling approximates ACSF for convex curves. The connection between the two processes is formalized in the following conjecture, which generalizes Conjecture 1 of~\cite{gp_acsf}.

\begin{conjecture}\label{conj_grid}
There exists a constant $\cg\approx 1.6$ such that the following is true: Let $\delta$ be an initial curve. Fix a time $t>0$ for which $\delta'=\delta(t)$ under ACSF is defined. For a fixed $n$, let $\gamma_0$ be the shortest curve homotopic to $\delta$ under obstacle set $P_n = (\Z/n)^2$. Let $m=\lfloor\cg tn^{4/3}\rfloor$, and let $\gamma_m=\HCS_P^{(m)}(\gamma_0)$ be the result of $m$ iterations of HCS starting with $\gamma_0$. Then, as $n\to\infty$, the Fr\'echet distance between $\gamma_m$ and $\delta'$ tends to $0$.
\end{conjecture}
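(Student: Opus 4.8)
}
The plan is to imitate the route taken by Calder and Smart~\cite{caldersmart} for the convex case: encode HCS as a monotone discrete scheme, show it is \emph{consistent} with the affine-invariant degenerate parabolic equation $u_t = (u_{xx}u_y^2 - 2u_{xy}u_xu_y + u_{yy}u_x^2)^{1/3}$ (the level-set form of ACSF) and \emph{stable}, and then invoke the Barles--Souganidis theorem on convergence of monotone schemes to the unique viscosity solution. Concretely, for each $n$ I would introduce a rescaled \emph{peeling-height function} $u_n\colon\R^2\to\R$ that records, after multiplication by $(\cg n^{4/3})^{-1}$, the HCS step at which the evolving $P_n$-curve first sweeps past a given point; for a convex initial curve this is exactly the onion-layer index studied in~\cite{gp_acsf,caldersmart}, and the whole point is to make sense of it for an arbitrary, possibly self-intersecting, initial curve — where one must either allow $u_n$ to be multivalued or work instead with the signed crossing/winding data around each point, which the results of Section~\ref{sec_properties} already control.

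In order, the steps I would carry out are: (1) a \emph{local homogenization} lemma, stating that whenever the current $P_n$-curve runs $\eps$-close to a fixed line segment, one HCS step pushes it inward by a deterministic amount that, to leading order, depends only on the local radius of curvature, with the constant $\cg$ appearing as the effective speed of the periodic (for the grid) shortest-homotopic-path problem near an almost-straight curve; (2) a discrete comparison principle for the functions $u_n$, built from the monotonicity properties already proved — shortest homotopic curves respect nesting, convexity is preserved, and intersection numbers do not increase — so that two HCS evolutions started from ordered curves stay ordered; (3) an equicontinuity/barrier estimate giving precompactness of $\{u_n\}$ in the uniform (hence Fr\'echet) topology; (4) the half-relaxed-limits argument identifying every subsequential limit with the viscosity solution, using (1) for consistency at smooth test points and (2) for stability; (5) for the uniformly random obstacle set, replacing the periodic homogenization of step~(1) by a stochastic-homogenization argument together with the concentration inequalities of~\cite{caldersmart}, which affects only the value of the effective constant (explaining why $\cg$ changes between the grid and random models).

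The hard part is twofold. First, step~(1) is the technical heart even in the convex setting: it requires understanding the typical combinatorics of shortest homotopic $P_n$-paths through a fine point cloud near a nearly-straight curve, and extracting the exact scaling exponent $4/3$ together with the constant $\cg$ — Calder--Smart already needed a delicate argument for this. Second, and more seriously, the monotone-scheme machinery genuinely needs a monotone structure, which is exactly what fails for non-simple curves: the swept region need not grow monotonically, a self-intersection can pinch into a cusp of infinite curvature, and — as noted in the introduction — ACSF is not even rigorously defined past such a singularity. So the realistic target of this programme is the simple-curve case up to the first singular time (where $\delta(t)$ in the statement is unambiguous), and even there one must still control the shortcutting parameter, proving that the limit is independent of the sequence $\eps=\eps(n)\to0$ and that the per-step discretization and shortcutting errors are $o(n^{-4/3})$ so they do not accumulate over the $\Theta(n^{4/3})$ steps. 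A rigorous treatment of self-intersections would first require a rigorous theory of ACSF through singularities, which is itself open.
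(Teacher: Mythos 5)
This statement is a \emph{conjecture} in the paper, not a theorem: the authors present only experimental evidence (Section~\ref{subsec_experiments}) and explicitly leave it open. So there is no ``paper's own proof'' to compare against, and your writeup is, as you yourself flag, a programme rather than a proof. Within that framing, your outline is sensible and tracks the one known rigorous result in this circle: Calder and Smart proved the \emph{random-obstacle, convex-curve} instance (a special case of Conjecture~\ref{conj_rand}), and the monotone-scheme / viscosity-solution / comparison-principle skeleton you describe is indeed theirs. Two caveats are worth making sharper than you do. First, even the convex case of Conjecture~\ref{conj_grid} is open: Calder--Smart's proof uses stochastic homogenization and concentration for i.i.d.\ samples, and nobody has carried out the periodic (grid) analogue of your step~(1), so the constant $\cg$ and the $n^{4/3}$ scaling for grid peeling remain conjectural even for convex curves. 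Second, the monotone structure you lean on in steps~(2)--(4) is fundamentally a convex-layer phenomenon; for a non-convex simple curve the ``peeling-height'' $u_n$ is not obviously a well-defined single-valued function, and the comparison principle you invoke (nesting of HCS evolutions, which is essentially Theorem~\ref{thm_bound_by_peeling} and Theorem~\ref{thm_gp} here) gives ordering only for curves that start nested, not the full quasi-monotonicity a Barles--Souganidis argument needs. Your own closing paragraph is the right assessment: the realistic reachable target is simple curves up to the first ACSF singularity, and extending past that would first require a rigorous ACSF theory through singularities, which is itself open.
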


Furthermore, we find that the connection between ACSF and HCS also holds if the uniform grid $(\Z/n)^2$ is replaced by a random point set, though with a different constant of time proportionality.

\begin{conjecture}\label{conj_rand}
There exists a constant $\crand\approx 1.3$ such that the following is true: Let $\delta$ be an initial curve, contained in a convex region $R$ of area $A$. Fix a time $t>0$ for which $\delta'=\delta(t)$ under ACSF is defined. For a fixed $n$, let $P$ be a set of $An^2$ obstacle points chosen uniformly and independently at random from $R$. Let $\gamma_0$ be the shortest curve homotopic to $\delta$ under obstacle set $P$. Let $m=\lfloor\crand tn^{4/3}\rfloor$, and let $\gamma_m=\HCS_P^{(m)}(\gamma_0)$ be the result of $m$ iterations of HCS starting with $\gamma_0$. Then, as $n\to\infty$, the Fr\'echet distance between $\gamma_m$ and $\delta'$ is almost surely smaller than $\eps$, for some $\eps = \eps(n)$ that tends to $0$ with $n$.
\end{conjecture}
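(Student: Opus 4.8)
The plan is to deduce Conjecture~\ref{conj_rand} from its convex special case --- the random-point-set version of the ACSF/convex-layer correspondence proved by Calder and Smart~\cite{caldersmart} --- by exploiting the locality of a single HCS step. There are three ingredients. First, a \emph{localization} step: at any time $t$ for which $\delta(t)$ is defined under ACSF the evolving curve is smooth with only isolated inflection points and self-intersections, so one can cover $\delta$ by finitely many windows, in each of which the curve stays strictly convex or strictly concave throughout the evolution, together with small neighborhoods of the inflection and self-intersection points; by the reflective (hence affine) invariance of HCS it suffices to treat the convex windows. Second, a \emph{comparison/barrier} step that uses the monotonicity of HCS --- non-increase of the number of intersections between two curves, from Section~\ref{sec_properties} --- to sandwich the discrete evolution between convex and concave polygonal approximations of explicit ACSF sub- and super-solutions. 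Third, a \emph{probabilistic} step transferring the Calder--Smart estimates for Poissonized convex layers, which are local statements about a neighborhood of a boundary point, to every window by a union bound.

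Concretely, on a window where $\delta$ is convex the shortest homotopic $P$-curve $\gamma_0$ hugs $\delta$ from the inside, and --- exactly as in the paragraph ``HCS for convex curves'' of Section~\ref{sec_HCS} --- each HCS step restricted to this window is a convex-layer peeling step applied to the points of $P$ trapped between $\gamma_0$ and $\delta$. By Calder and Smart, after $m=\lfloor\crand t n^{4/3}\rfloor$ such steps the window's boundary has moved inward by the ACSF displacement up to an additive $o(1)$, with $\crand$ the homogenization constant of the limiting degenerate-elliptic PDE for the uniform (Poisson) point process --- the analogue for the random model of the lattice constant $\cg$ of Conjecture~\ref{conj_grid}, and, like it, presumably known only numerically ($\approx 1.3$). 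One then patches: choose a polynomially fine net of windows, apply the convex estimate in each, and use the non-increase of the intersection number together with explicitly built convex and concave barrier curves to keep the $o(1)$ errors in different windows from interacting. The probabilistic input is then routine --- a Chernoff bound shows every window of scale $n^{-1+o(1)}$ carries $(1\pm o(1))$ times its expected point count, and a union bound over the net makes the bad event negligible, giving the almost sure statement with $\eps(n)\to 0$.

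The main obstacle I anticipate is making the locality of the shortening substep quantitative. Replacing each $\rho_i$ by its shortest homotopic path is a priori a global operation: through a long thin neck of the curve it could let one region influence a distant one in a single step. One must prove that, away from the finitely many locations where a local feature is collapsing, one HCS step moves every point by only $O(n^{-1})$ and the new position of a point depends only on $P$ and the curve within an $O(n^{-1+o(1)})$-ball --- a finite-speed-of-propagation lemma. Collapsing self-intersections must be handled separately, but since Conjecture~\ref{conj_rand} only claims convergence at times $t$ for which $\delta(t)$ is defined under ACSF, and ACSF is itself undefined across self-intersection singularities, it suffices to run the argument up to the first singular event affecting each feature and then glue. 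Establishing this locality lemma for a random --- rather than gridded --- obstacle set, where the trapped regions have irregular boundaries and the point density fluctuates, is where I expect most of the work to lie.
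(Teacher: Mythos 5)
The statement you are trying to prove is labeled \emph{Conjecture}~\ref{conj_rand} in the paper, not a theorem: the authors explicitly leave it open, proving only partial structural results about HCS (Lemmas~\ref{thm_affine}--\ref{thm_curvature}, Theorems~\ref{thm_crossings}--\ref{thm_infl}) that are consistent with the conjecture, and citing Calder and Smart~\cite{caldersmart} for the special case where $\delta$ is convex. So there is no ``paper's own proof'' to compare against; what you have written is a research program, and you in fact say so yourself (``where I expect most of the work to lie''). It cannot be graded as a complete argument because its central lemma --- the finite-speed-of-propagation / locality lemma for the shortening substep --- is conjectured, not proved.

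Beyond that acknowledged gap, a few of the other ingredients are thinner than the prose suggests. The Calder--Smart theorem is a global statement about a convex curve, obtained by homogenization of a Hamilton--Jacobi equation on the entire convex region; it is not stated as a local boundary estimate, and extracting a windowed version that can be patched is itself nontrivial new analysis, not ``routine.'' The reduction ``each HCS step restricted to this window is a convex-layer peeling step applied to the points trapped between $\gamma_0$ and $\delta$'' is false as stated: even on a strictly convex arc, the shortening substep replaces a piece by the shortest homotopic path, and that path is determined by the entire obstacle set and the entire curve, including topology far outside the window --- precisely the globality you flag later. The barrier step is also circular unless handled carefully: controlling the HCS evolution of a polygonal sub-/super-solution by ACSF is again an instance of the conjecture for that barrier, so the barriers must be chosen inside the convex regime where Calder--Smart applies, which brings back the windowing problem. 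Finally, the proposal gives no mechanism for establishing the \emph{existence} of the constant $\crand$, only a heuristic identification with a homogenization constant of an unspecified PDE; a genuine proof would have to settle that point. In short, the outline is a plausible and in places insightful roadmap (localization plus intersection-number monotonicity as a comparison principle plus Calder--Smart in the convex cells is a sensible strategy), but it is not yet a proof, and the paper itself regards the statement as open.
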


As mentioned above, Conjecture~\ref{conj_rand} was recently proven by Calder and Smart~\cite{caldersmart} for the special case where $\delta$ is a convex curve. Furthermore, they point out that their experiments seem to suggest $\crand=4/3$.

\subsection{Experiments}\label{subsec_experiments}

We tested Conjectures~\ref{conj_grid} and~\ref{conj_rand} on a variety of test curves. We found that for all our test curves, the result of HCS does seem to converge to the result of ACSF as $n\to\infty$, both for grid and for random obstacle sets.

\begin{figure}
\centerline{\includegraphics{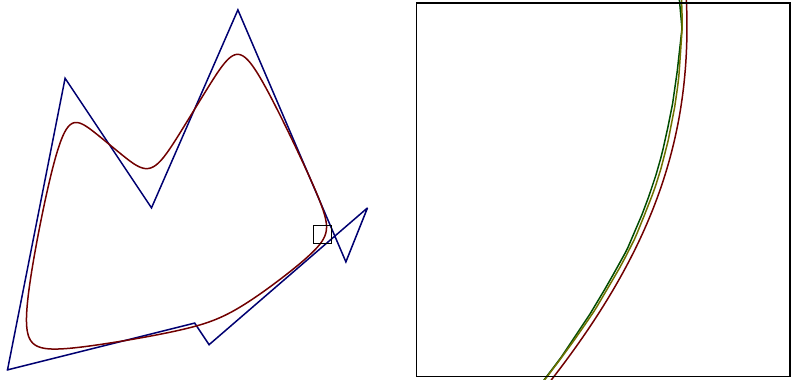}}
\caption{\label{fig_camelfish}Left: Initial curve $\Delta$ (blue) and simulated ACSF result after the curve's length reduced to 70\% of its original length (red). Right: Comparison between ACSF approximation (red), HCS with $n=10^7$ uniform-grid obstacles (green), and HCS with $n=10^7$ random obstacles (yellow) on a small portion of the curve.}
\end{figure}

Let us illustrate our experiments on the piecewise-linear curve $\Delta$ having vertices $(0,0),\allowbreak (0.16,0.81),\allowbreak (0.4,0.45),\allowbreak(0.64,1),\allowbreak(0.94,0.3),\allowbreak(1,0.45),\allowbreak(0.56,0.07),\allowbreak(0.52,0.13)$. We approximated ACSF using an approach similar to the one in \cite{gp_acsf}. We ran our ACSF simulation on $\Delta$ until we obtained a curve $\Delta'$ whose length equals 70\% of the original length of $\Delta$.  See Figure~\ref{fig_camelfish} (left). This happened at $t^*\approx 0.0266$. By this time, the self-intersection and an inflection point of the curve have disappeared.

Then we introduced in the unit square $[0,1]^2\supset\Delta$ a set $P$ of $n$ obstacle points, where $P$ is either a uniform grid (i.e.~a $\sqrt n\times\sqrt n$ grid) $G_n$, or a random set $R_n$. For each case, we initially snapped each vertex of $\Delta$ to its closest point in $P$, obtaining a $P$-curve, and then we ran HCS until the length of the curve shrank to 70\% of its original length, obtaining a new curve $\Delta''=\Delta''(P)$. We did this for several values of $n$. For each case, we computed $h(\Delta',\Delta'')$, where $h(\gamma_1,\gamma_2)$ for piecewise-linear curves $\gamma_1, \gamma_2$ is defined as the maximum distance between a vertex of one curve and the closest point on the other curve. (For ``nice'' curves as ours, there is no significant difference, if at all, between this distance $h$ and either the Hausdorff or the Fr\'echet distance between the two curves.)

For random obstacles, we conducted this experiment for $n=10^4, 10^5, 10^6, 10^7$, taking the average of $5$ samples for each value of $n$. Our random-obstacle program is limited by memory rather than by time, since it stores all the obstacle points in memory. For uniform-grid obstacles, we conducted this experiment also for $n=10^8$. After this point, our ACSF approximation $\Delta'$ does not seem to be accurate enough for reliable comparisons. The results are shown in Figure~\ref{fig_hplot} (left).

\begin{figure}
\centerline{\includegraphics{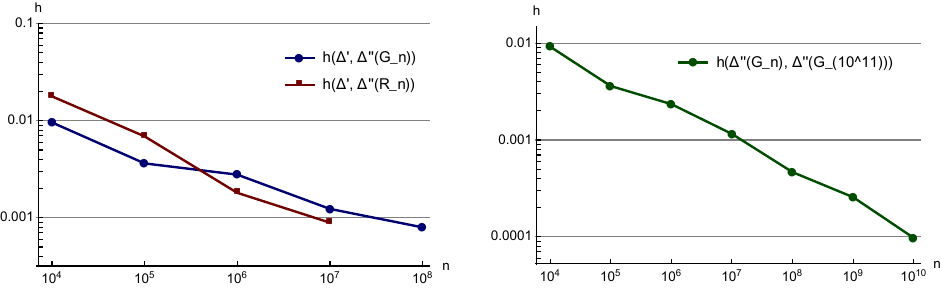}}
\caption{\label{fig_hplot}Left: Distance between ACSF approximation and HCS with uniform-grid obstacles (blue curve) or random obstacles (red curve, average of $5$ trials), for increasing values of $n$, the number of obstacles. Right: Distance between HCS with uniform-grid obstacles for $n=10^4, \ldots, 10^{10}$ and with $n=10^{11}$.}
\end{figure}

We also checked whether the relation between the ACSF time $t^*$ and the number of HCS iterations $m$ behaves as predicted by Conjectures~\ref{conj_grid} and~\ref{conj_rand}. For this purpose, we computed $c=m/(t^*n^{2/3})$ for each case, and checked whether $c$ is roughly constant. The results are shown in Table~\ref{table_c}.

\begin{table}
\centerline{\begin{tabular}{c|cc|cc}
$n$&iterations with $G_n$&$\cg$&avg.~iterations with $R_n$&$\crand$\\ \hline
$10^4$&$20$&$1.616$&$15.6$&$1.261$\\
$10^5$&$93$&$1.619$&$75.2$&$1.309$\\
$10^6$&$434$&$1.628$&$351.2$&$1.317$\\
$10^7$&$2006$&$1.621$&$1628.6$&$1.316$\\
$10^8$&$9266$&$1.613$
\end{tabular}}
\caption{\label{table_c}Approximations of the constants $\cg$ and $\crand$ given by the experiments.}
\end{table}

As we can see, Conjectures~\ref{conj_grid} and~\ref{conj_rand} are well supported by the experiments.

Finally, we measured the rate of convergence of the uniform-grid HCS to its limit shape as $n\to\infty$. To this end, we computed $h(\Delta''(G_n),\Delta''(G_m))$ for $n\in\{10^4, 10^5, \ldots, 10^{10}\}$ and $m=10^{11}$. See Figure~\ref{fig_hplot} (right). As we can see, increasing $n$ by a factor of $10$ has the effect of multiplying the distance by roughly a factor of $0.47$.

See Appendix~\ref{app_implementation} for some implementation details of our ACSF and HCS simulations.

\section{Properties of homotopic curve shortening}\label{sec_properties}

In this paper we prove that HCS satisfies some properties analogous to those of ACSF. This section contains the statements of our results, and the proofs appear in Section~\ref{sec_proofs}.

\begin{lemma}\label{thm_affine}
HCS is invariant under affine transformations. Namely, if $P$ is a set of obstacle points, $\gamma$ is a $P$-curve, and $T$ is a non-degenerate affine transformation, then $T(\HCS_P(\gamma))=\HCS_{T(P)}(T(\gamma))$.
\end{lemma}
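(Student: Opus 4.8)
The plan is to show that each of the three substeps in the definition of $\HCS_P$ — splitting, shortcutting, and shortening — commutes with the affine transformation $T$, and then compose. The key structural observation is that $T$ induces a bijection between $P$-curves and $T(P)$-curves, and that every combinatorial ingredient in the construction (the circular list of visited obstacle points, which vertices are nailed, homotopy classes relative to $P$) is preserved by $T$ because $T$ is a homeomorphism of $\R^2$ carrying $P$ to $T(P)$.

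First I would record the easy invariances. Collinearity is affine-invariant, so $\angle p_{i-1}p_ip_{i+1}=\pi$ holds iff the same angle for the images equals $\pi$; hence $T$ takes the nailed vertices of $\gamma$ exactly to the nailed vertices of $T(\gamma)$, and in particular the count $k$ is the same. This makes the splitting substep commute: $T$ carries the paths $\delta_0,\dots,\delta_{k-1}$ of $\gamma$ to the corresponding paths of $T(\gamma)$. For the shortening substep, note that a homotopy of paths in $\R^2\setminus P$ (rel endpoints) is carried by $T$ to a homotopy in $\R^2\setminus T(P)$, since $T$ is a homeomorphism fixing the complement structure; thus $T$ maps the homotopy class of $\rho_i$ to that of $T(\rho_i)$, and — crucially — $T$ maps the unique shortest $P$-path in a class to the unique shortest $T(P)$-path in the image class. (Length itself is not affine-invariant, but "being the \emph{minimizer} of length within a homotopy class" is a property preserved by any homeomorphism that is bi-Lipschitz on the relevant region, which a non-degenerate affine map is; uniqueness, assumed from Appendix~\ref{app_unique}, then forces $T$ of the minimizer to equal the minimizer of the image class.)

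The step I expect to require the most care is the shortcutting substep, because it involves a choice of parameter $\eps$ and the disks $C_{p_i}$, and $T$ does not map round circles to round circles. The resolution is that the output of the shortcutting step is only an intermediate object fed into the shortening step, and the homotopy class of $\rho_i$ in $\R^2\setminus P$ does not depend on $\eps$ (for $\eps$ small enough) nor on the exact shape of the excluded neighborhoods — any sufficiently small simple detour of $\delta_i$ to the correct side of each intermediate vertex $p_i$ lies in the same class. So I would argue as follows: pick $\eps$ small enough to work simultaneously for $\gamma$ (defining $\rho_i$) and pick $\eps'$ small enough to work for $T(\gamma)$ (defining the analogous $\rho_i'$); then $T(\rho_i)$ is a small detour of $T(\delta_i)$ passing on the same side of each $T(p_i)$ as $\rho_i'$ does around $p_i$ — "same side" being well-defined and $T$-equivariant because the cyclic order of the two incident edges around $p_i$ is an orientation-or-reversal-preserved combinatorial datum — hence $T(\rho_i)$ and $\rho_i'$ are homotopic in $\R^2\setminus T(P)$. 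Consequently $T(\rho_i)$ and $\rho_i'$ have the same shortest homotopic representative, and feeding this through the already-established commutation of the shortening step gives $T(\HCS_P(\gamma))=\HCS_{T(P)}(T(\gamma))$. The case $k=0$ is identical but with the single closed curve $\gamma$ in place of the paths $\delta_i$, using free homotopy of curves in $\R^2\setminus P$ in place of homotopy rel endpoints.

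Finally I would remark that the same bi-Lipschitz argument shows the construction is well-defined independently of the small parameter $\eps$, which is implicitly needed for the statement to make sense; this is routine and I would state it in one sentence rather than belabor it.
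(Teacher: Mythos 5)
Your overall plan --- commute each of the three substeps with $T$ --- is reasonable, and your handling of the splitting step (collinearity is affine-invariant, so nailed vertices correspond) and the shortcutting step (the intermediate output only matters up to homotopy class, which is $T$-equivariant and independent of the shape of the excision neighborhoods) is correct and appropriately careful. However, the shortening step contains a genuine gap, and it is exactly the step on which the whole lemma turns.

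You claim that ``being the minimizer of length within a homotopy class'' is preserved by any bi-Lipschitz homeomorphism. This is false. Bi-Lipschitzness gives two-sided bounds on how lengths change, but says nothing about which curve in a family is shortest; for example, $f(x,y)=(x+\sin y,\,y)$ is a bi-Lipschitz self-homeomorphism of $\R^2$ that sends the straight segment from $(0,0)$ to $(0,2\pi)$ to a sinusoid, which is certainly not the shortest path between its endpoints. So bi-Lipschitzness cannot be the reason the shortening step commutes with $T$, and invoking uniqueness (Appendix~\ref{app_unique}) does not repair this: uniqueness tells you there is one minimizer in each class, not that $T$ sends the minimizer of one class to the minimizer of its image class. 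The fact you need is specific to the geometry of taut paths amid point/polygonal obstacles: the shortest path in a homotopy class is piecewise-linear with vertices at obstacle points, and it is the unique \emph{locally} shortest path, characterized by every vertex being \emph{stable} (the curve turns by at least $\pi$ around its obstacle). Stability of a vertex, being a statement about which side of two segments a point lies on, is affine-invariant, and that is what actually makes the shortening step commute with $T$. This is precisely the route the paper takes, phrased through the vertex-release algorithm of Section~\ref{sec_vrelease}: $T$ preserves whether a vertex is stable or unstable, and releasing an unstable vertex (taking the convex hull of obstacle points inside a triangle) commutes with $T$ because convex hulls and triangles are affine-invariant. If you replace your bi-Lipschitz paragraph with this stability argument, your proof closes.
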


In particular, if $T$ is a grid-preserving affine transformation, meaning that $T$ maps $(\Z/n)^2$ injectively to itself, then the HCS evolution using $P=(\Z/n)^2$ (as in Conjecture~\ref{conj_grid}) is unaffected by $T$. Hence, HCS on uniform-grid obstacles is invariant under a certain subset of the area-preserving affine transformations, just as in grid peeling~\cite{gp_acsf}. 

Also, if $T$ is an area-preserving affine transformation, then the probability distribution of random sets $P$ in the convex region $R$ of Conjecture~\ref{conj_rand} stays unaffected after applying $T$ to $R$.

\begin{lemma}\label{thm_convex}
Let $\gamma$ be a simple $P$-curve, and let $\gamma'=\HCS_P(\gamma)$. If $\gamma$ is the boundary of a convex polygon, then so is $\gamma'$. Hence, under HCS, once a curve becomes the boundary of a convex polygon, it stays that way.
\end{lemma}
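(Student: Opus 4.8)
The plan is to prove the sharp explicit statement: writing $K$ for the convex polygon bounded by $\gamma$ and setting
$Q_1 := (P\cap\bar K)\setminus\mathrm{vert}(K)$, one has $\gamma' = \partial\,\mathrm{conv}(Q_1)$, traversed once. In other words, one HCS step on the boundary of a convex polygon performs exactly one step of convex-layer peeling on $P\cap\bar K$ (which also establishes the equivalence with the convex-layer decomposition announced in Section~\ref{sec_HCS}, cf.~\cite{gp_acsf}). Since $\mathrm{vert}(K)=\mathrm{vert}(\mathrm{conv}(P\cap\bar K))$, the set $Q_1$ consists of $P\cap\bar K$ with its convex-hull vertices removed, so $\partial\,\mathrm{conv}(Q_1)$ is the boundary of a convex polygon whose vertices lie in $Q_1\subseteq P$ — a convex $P$-curve — as soon as $Q_1$ has at least three non-collinear points; the degenerate cases ($Q_1$ empty, a single point, or collinear) are dealt with separately and trivially, the output being a point or a doubled segment, consistent with HCS being about to terminate.

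\textbf{Setup and the case of no nailed vertices.} First I set notation: the non-nailed vertices of $\gamma$ are precisely the corners $\mathrm{vert}(K)$, the nailed vertices are $N:=P\cap(\text{relative interiors of the edges of }K)$, and with $S:=P\cap\mathrm{int}\,K$ we have $P\cap\bar K=\mathrm{vert}(K)\sqcup N\sqcup S$ and $Q_1=N\cup S$. When $N=\varnothing$ (always the case if $P$ is in general position), $\gamma'$ is the shortest closed curve homotopic, in the extended sense, to the curve $\rho$ obtained by cutting all corners of $\gamma$. For $\eps$ small enough, $\rho$ is a simple convex closed curve whose enclosed region is $\mathrm{int}\,K$ minus a tiny triangle at each corner, so $\rho$ encloses exactly the obstacles $S$. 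Two facts finish this case: (i) any closed curve with nonzero winding number around every point of $S$ has length at least $\mathrm{perimeter}(\mathrm{conv}(S))$ — by a Cauchy--Crofton/parity argument, every line meeting $\mathrm{int}\,\mathrm{conv}(S)$ must be crossed by the curve an even and positive number of times, since $S$ lies on both sides of such a line and the curve reaches both; and (ii) $\partial\,\mathrm{conv}(S)$ is homotopic to $\rho$ in the extended sense, being the limit as $\delta\to0$ of the dilated convex curves $\partial(\mathrm{conv}(S)\oplus B_\delta)$, which for small $\delta$ are simple closed curves enclosing exactly $S$ and hence homotopic to $\rho$. Combined with the uniqueness of the shortest homotopic curve (Appendix~\ref{app_unique}), (i)--(ii) give $\gamma'=\partial\,\mathrm{conv}(S)=\partial\,\mathrm{conv}(Q_1)$.

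\textbf{The case with nailed vertices.} List the nailed vertices as $q_0,\dots,q_{k-1}$ in cyclic order; then $\gamma'=\sigma_0\cdots\sigma_{k-1}$, where $\sigma_i$ is the shortest $P$-path from $q_i$ to $q_{i+1}$ homotopic to $\rho_i$, the $i$-th arc of $\gamma$ with its interior corners cut. I would argue in three steps. (a) \emph{Lower bound.} Concatenating the homotopies $\sigma_i\simeq\rho_i$ rel endpoints yields a homotopy of $\gamma'$ to $\rho:=\rho_0\cdots\rho_{k-1}$ as closed curves that avoids $S$ (it only ever passes through $N$); since $\rho$ is the corner-cut $\gamma$, it encloses every point of $S$, hence so does $\gamma'$, and $\gamma'$ also passes through every point of $N$. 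A line meeting $\mathrm{int}\,\mathrm{conv}(S\cup N)$ is then crossed by $\gamma'$ at least twice (if $S$ straddles the line, argue as in (i); otherwise a point of $N$ lies strictly on the far side while $\gamma'$ has nonzero winding around a point of $S$ on the near side, so again $\gamma'$ reaches both sides), giving $\mathrm{len}(\gamma')\ge\mathrm{perimeter}(\mathrm{conv}(S\cup N))=\mathrm{perimeter}(\mathrm{conv}(Q_1))$. (b) \emph{A matching competitor.} Each $q_i$ lies on $\partial\,\mathrm{conv}(Q_1)$ (it is on $\partial K$ and $\mathrm{conv}(Q_1)\subseteq K$), and radial projection from an interior point shows the $q_i$ appear on $\partial\,\mathrm{conv}(Q_1)$ in the same cyclic order as on $\partial K$; moreover the arc $\alpha_i$ of $\partial\,\mathrm{conv}(Q_1)$ from $q_i$ to $q_{i+1}$ lying in the ``cap'' between the chord $q_iq_{i+1}$ and the boundary arc $\delta_i$ is homotopic to $\rho_i$, because the region between $\alpha_i$ and $\rho_i$ contains no obstacle: every obstacle strictly inside that cap lies in $S\subseteq\mathrm{conv}(Q_1)$ (hence below $\alpha_i$), and the corners of $K$, which were cut off, lie above $\rho_i$ and on $\partial K$, outside that region. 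Thus $\partial\,\mathrm{conv}(Q_1)$, traversed once, is a legitimate output candidate, so $\mathrm{len}(\gamma')=\sum_i\mathrm{len}(\sigma_i)\le\sum_i\mathrm{len}(\alpha_i)=\mathrm{perimeter}(\mathrm{conv}(Q_1))$. (c) \emph{Equality and uniqueness.} Steps (a) and (b) force $\mathrm{len}(\sigma_i)=\mathrm{len}(\alpha_i)$ for each $i$, whence $\sigma_i=\alpha_i$ by uniqueness of shortest homotopic paths, and since the arcs $\alpha_i$ tile $\partial\,\mathrm{conv}(Q_1)$ we get $\gamma'=\partial\,\mathrm{conv}(Q_1)$.

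\textbf{Main obstacle.} The delicate point is step (b): showing that the arcs of $\partial\,\mathrm{conv}(Q_1)$ between consecutive nailed vertices really are homotopic to the $\rho_i$, and that they assemble, without overlap and in the right cyclic order, into all of $\partial\,\mathrm{conv}(Q_1)$. This rests on the ``cap'' decomposition $K=\mathrm{conv}(N)\cup\bigcup_i\bar R_i$ together with a careful accounting of which obstacles lie in which cap and on which side of each arc — elementary convex geometry, but fiddly, particularly in the non-generic configurations where several nailed vertices are collinear or lie on a common edge of $K$, so that some caps degenerate to segments and some $q_i$ are not vertices of $\mathrm{conv}(N)$; those are handled by noting that the corresponding $\sigma_i=\rho_i$ are already the straight chords in question. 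I would also isolate, as a small auxiliary lemma (to go alongside the uniqueness result in Appendix~\ref{app_unique}), the standard facts used in (i)/(ii): the free homotopy class in $\R^2\setminus S$ of a simple closed curve is determined by the subset of $S$ it encloses, and the shortest representative of the class ``enclose exactly $S$'' is $\partial\,\mathrm{conv}(S)$.
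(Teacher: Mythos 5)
Your proof is correct in outline but takes a genuinely different, and substantially heavier, route than the paper's. The paper dispatches this lemma in one sentence using the vertex-release algorithm (Lemma~\ref{lemma_alg}): each vertex release replaces a corner $v$ of the polygon by a convex chain $z_1,\dots,z_k$ lying between $v$ and the segment $uw$, and an elementary local check shows the resulting polygon is still convex; since the shortening step of HCS is a finite sequence of such releases, convexity persists. Your approach instead goes for the sharp global characterization $\gamma'=\partial\,\mathrm{conv}(Q_1)$, from which convexity falls out as a corollary, and for this you need Cauchy--Crofton lower bounds, winding-number invariance, and a matching competitor argument. What your route buys is the explicit identification of HCS on convex boundaries with one round of convex-layer peeling — a fact the paper asserts in Section~\ref{sec_HCS} but does not isolate as a proved statement — so it is more informative. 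What it costs is considerable extra machinery and, as you candidly flag, an unfinished step: step (b) in the nailed case (that the caps $\alpha_i$ are homotopic to the $\rho_i$ and tile $\partial\,\mathrm{conv}(Q_1)$ in the right cyclic order, including the degenerate collinear configurations) is stated at the level of a plausibility argument rather than carried through, and the degenerate cases where $Q_1$ is empty, a single point, or collinear are waved at rather than handled. If the goal is only to establish the lemma as stated, the vertex-release observation is much more economical and avoids all of the nailed-vertex bookkeeping entirely; your argument would be better suited as a stand-alone proof of the HCS-equals-peeling correspondence, in which case step (b) needs to be done in full.
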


The \emph{total absolute curvature} of a piecewise-linear curve $\gamma$ with vertices $(p_0, \ldots, p_{m-1})$ is the sum of the exterior angles $\sum_{i=0}^{m-1} (\pi-|\angle p_{i-1}p_ip_{i+1}|)$. It equals $2\pi$ if $\gamma$ is the boundary of a convex polygon, and it is larger than $2\pi$ otherwise.

\begin{lemma}\label{thm_curvature}
Let $\gamma$ be a $P$-curve, and let $\gamma'=\HCS_P(\gamma)$. Let $\alpha, \alpha'$ be the total absolute curvature of $\gamma, \gamma'$, respectively. Then $\alpha \ge \alpha'$. Hence, under HCS, the total absolute curvature of a curve never increases.
\end{lemma}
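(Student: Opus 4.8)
The plan is to track what happens to the total absolute curvature through each of the three substeps (splitting, shortcutting, shortening) and argue that the sum of exterior angles can only go down. Splitting is purely notational — it does not change the curve as a point set, so it does not change the total absolute curvature; it only serves to reorganize the curve into paths at the nailed vertices (which contribute $0$ to the exterior-angle sum anyway, since $\angle p_{i-1}p_ip_{i+1}=\pi$ there). So the real content is in analyzing shortcutting and shortening, and I would handle them as follows.

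For the shortcutting step, I would show that replacing a vertex $p_i$ by a short straight chord $x_iy_i$ across the circle $C_{p_i}$ does not increase the total absolute curvature, \emph{provided $\eps$ is small enough}. The point is local: near $p_i$ the old curve consisted of two segments meeting at exterior angle $\beta_i = \pi - |\angle p_{i-1}p_ip_{i+1}|$, and the new curve near there is the three-segment path $e_i|_{[\,\cdot\,,x_i]}$, $x_iy_i$, $e_{i+1}|_{[y_i,\,\cdot\,]}$. The turning at $x_i$ plus the turning at $y_i$ of this little detour equals exactly the original turning $\beta_i$ at $p_i$ — this is just the fact that the total turn of a piecewise-linear arc going from direction $u$ to direction $v$ along a \emph{convex} chain equals the angle from $u$ to $v$, and here the chord $x_iy_i$ lies on the correct side so the two turns are of the same sign and sum to $\beta_i$. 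Hence shortcutting leaves the total absolute curvature \emph{unchanged}, not merely non-increasing. (If $P$ is in general position this is the whole story for the non-nailed case; one must double-check the geometry when $\beta_i$ is close to $\pi$, i.e.\ the curve nearly backtracks at $p_i$, but since $p_i$ is not nailed and the segments have endpoints at distinct obstacle points, $\eps$ small handles it.)

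The heart of the proof, and the step I expect to be the main obstacle, is showing that the \emph{shortening} step — replacing each $\rho_i$ by the shortest $P$-path homotopic to it — does not increase the total absolute curvature. The natural approach is to realize the homotopy from $\rho_i$ to the shortest path as a monotone process (e.g.\ the ``funnel'' tightening, or a continuous shortening homotopy) and argue that total absolute curvature is monotone along it. The key geometric fact I would invoke is that when a piecewise-linear path is pulled taut against point obstacles, it wraps around obstacle points forming locally convex arcs, and removing a ``slack'' bend — replacing a path $a\to b\to c$ by $a\to c$ when the triangle $abc$ contains no obstacle in the relevant region — cannot increase the sum of exterior angles, by the same convex-chain/triangle-inequality-for-angles argument as above: the turn at the new single vertices at $a$ and $c$ is dominated by the sum of turns at $a$, $b$, $c$ in the old path. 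One has to be careful that each elementary straightening move is of this "local convex shortcut" type and that the obstacles being wrapped around do not cause the exterior angle to increase elsewhere; I would argue that the tightening can be decomposed into such elementary moves (this is essentially the correctness structure of the funnel algorithm), and that a global accounting of exterior angles over all moves telescopes to a non-increase. Care is also needed at the endpoints $q_i, q_{i+1}$ of each $\rho_i$: these are the nailed vertices, and when the pieces $\gamma'_i$ are reconcatenated one must check the exterior angle created at $q_i$ in $\gamma'$ is accounted for — but since $q_i$ contributed $0$ to $\gamma$'s curvature, any nonzero contribution in $\gamma'$ looks like an increase, so the argument must show the shortening of the two adjacent pieces "gave up" at least that much turning internally. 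This endpoint bookkeeping, together with making the "decompose into elementary convex shortcuts" claim rigorous, is where the real work lies; everything else is the elementary observation that a convex polygonal detour between two directions has total turn equal to the angle between those directions.
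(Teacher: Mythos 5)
Your outline is on the right track and matches the paper's argument in spirit: HCS consists of local shortening moves, and each such move compresses adjacent tangent directions into a sub-arc of $\So$, so the total turning cannot increase. Your analysis of the shortcutting step is correct. But the two things you flag as ``where the real work lies''---decomposing the shortening step into elementary convex moves, and the angle bookkeeping at the neighbors of a released vertex and at the split points $q_i$---are precisely what the paper supplies, and without them your plan is not yet a proof.

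For the decomposition, the paper develops and proves correct a \emph{vertex-release algorithm} (Lemma~\ref{lemma_alg}): one repeatedly selects an unstable vertex $b$ and replaces it in the chain $a\to b\to c$ by the convex chain $z_1,\ldots,z_k$ of obstacle points on the hull of those inside triangle $abc$. Note that the elementary move is $a\to b\to c \Rightarrow a\to z_1\to\cdots\to z_k\to c$, not simply $a\to c$; but the new edge directions $\overrightarrow{az_1},\ldots,\overrightarrow{z_kc}$ still lie monotonically on the $\So$-arc from $\overrightarrow{ab}$ to $\overrightarrow{bc}$. For the bookkeeping, the paper reformulates total absolute curvature as the length of the shortest tour of $\So$ that visits the edge directions $v_0,\ldots,v_{m-1}$ in order. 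Under this reformulation, a vertex release replaces two adjacent directions $v_i,v_{i+1}$ by a set of directions lying on the arc between them, so any tour valid for the old direction sequence already passes through all the new directions and remains valid; the minimum tour length therefore cannot increase. This single observation eliminates the per-vertex sign bookkeeping at $p_{i-1}$, $p_{i+1}$, and at the nailed endpoints $q_i$ that you were worried about---the accounting is global, not local---and it also handles the shortcutting step uniformly, since there too the new direction $u$ of the chord $x_iy_i$ lies on the sub-arc between $v_i$ and $v_{i+1}$. Your plan would succeed, but absent this reframing you are left doing exactly the delicate casework you identify as the main obstacle, where it is easy to get signs wrong when a release at $b$ also changes the turns at its neighbors.
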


If $\gamma, \delta$ are disjoint $P$-curves, then $\HCS_P(\gamma), \HCS_P(\delta)$ are not necessarily disjoint. Similarly, if $\gamma$ is a simple $P$-curve, then $\HCS_P(\gamma)$ is not necessarily simple. See Figure~\ref{fig_not_disjoint}.

\begin{figure}
\centerline{\includegraphics{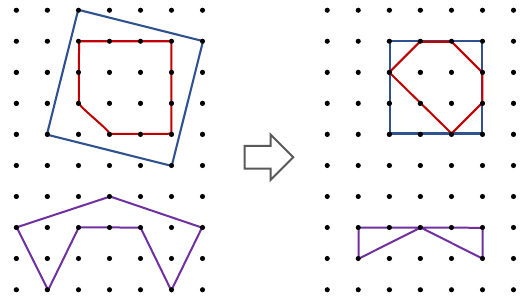}}
\caption{\label{fig_not_disjoint}HCS might cause disjoint curves to intersect, or a simple curve to self-intersect.}
\end{figure}

Curves $\gamma, \delta$ are called \emph{disjoinable} if they can be made into disjoint curves by peforming on them an arbitrarily small perturbation. Similarly, a curve $\gamma$ is called \emph{self-disjoinable} if it can be turned into a simple curve by an arbitrarily small perturbation. Note that if $\gamma$ is self-disjoinable then $\gamma$,~$\gamma$ are disjoinable, though the reverse is not necessarily true: Consider for example a curve $\gamma$ that makes two complete clockwise turns around the unit circle.

A self-disjoinable curve is also called \emph{weakly simple}. Akitaya et al.~\cite{weaklysimple} recently found an algorithm for recognizing weakly simple polygons in time $O(n\log n)$. (See also~\cite{fulek_nlogn} for an $O(n\log n)$-time algorithm for the more general problem of recognizing \emph{weak-embeddings} of graphs.) 

An intersection between two curves, or between two portions of one curve, is called \emph{transversal}, if at the point of intersection both curves are differentiable and their normal vectors are not parallel at that point. If all intersections between curves $\gamma_1$ and $\gamma_2$ are transversal, then we say that $\gamma_1, \gamma_2$ are themselves \emph{transversal}. Similarly, if all self-intersections of $\gamma$ are transversal, then we say that $\gamma$ is \emph{self-transversal}. (Transversal and self-transversal curves are sometimes called \emph{generic}, see e.g.~\cite{untangling}.)

If $\gamma$ is self-transversal, we denote by $\cros(\gamma)$ the number of self-intersections of $\gamma$.\footnote{A self-intersection in a curve $\gamma:\So\to\R^2$ is a pair $s\neq t$ such that $\gamma(s)=\gamma(t)$. Hence, if $\gamma$ passes $k$ times through a certain point, that counts as $\binom{k}{2}$ self-intersections.} If $\gamma$ is not self-transversal, then we define $\cros(\gamma)$ as the minimum of $\cros(\widehat\gamma)$ among all self-transversal curves $\widehat\gamma$ that are $\eps$-close to $\gamma$, for all small enough $\eps>0$. Hence, $\cros(\gamma)=0$ if and only if $\gamma$ is self-disjoinable. We define similarly the number of intersections $\cros(\gamma_1,\gamma_2)$ between two curves. Then, $\gamma_1$ and $\gamma_2$ are disjoinable if and only if $\cros(\gamma_1,\gamma_2)=0$. Fulek and T\'oth recently proved that the problem of computing $\cros(\gamma)$ is NP-hard~\cite{fulek_nphard}.

\begin{theorem}\label{thm_crossings}
Let $\gamma$ be a $P$-curve, and let $\gamma'=\HCS_P(\gamma)$. Then their self-intersection numbers satisfy $\cros(\gamma')\le \cros(\gamma)$. Let $\delta$ be another $P$-curve, and let $\delta'=\HCS_P(\delta)$. Then their intersection numbers satisfy $\cros(\gamma',\delta')\le\cros(\gamma,\delta)$. In particular, if $\gamma$ is self-disjoinable, so is $\gamma'$, and if $\gamma,\delta$ are disjoinable, then so are $\gamma',\delta'$. Hence, under HCS, the intersection and self-intersection numbers never increase.
\end{theorem}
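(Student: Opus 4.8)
\section*{Proof proposal}

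The plan is to analyze a single step $\gamma\mapsto\gamma'=\HCS_P(\gamma)$, since the ``hence'' part follows by iteration; and it suffices to prove the two inequalities $\cros(\gamma')\le\cros(\gamma)$ and $\cros(\gamma',\delta')\le\cros(\gamma,\delta)$, because the ``in particular'' part is immediate from the facts that $\cros(\cdot)=0\iff$ self-disjoinable and $\cros(\cdot,\cdot)=0\iff$ disjoinable. I would split the HCS step into its two geometrically meaningful sub-operations --- shortcutting ($\gamma\rightsquigarrow\rho$) and shortening ($\rho\rightsquigarrow\gamma'$), writing $\rho_\gamma,\rho_\delta$ for the post-shortcutting curves --- and bound $\cros$ across each, so that $\cros(\gamma')\le\cros(\rho_\gamma)\le\cros(\gamma)$ and $\cros(\gamma',\delta')\le\cros(\rho_\gamma,\rho_\delta)\le\cros(\gamma,\delta)$ chain together.

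For the shortcutting step, choose $\eps$ small enough that each disk bounded by $C_p$ ($p\in P$) meets the curve(s) only in the segments incident to $p$; this is possible, since there are finitely many segments and each one not passing through $p$ stays at positive distance from $p$. Then the step is a perturbation supported in the pairwise-disjoint disks, so crossings outside $\bigcup_p C_p$ are unaffected, as is the homotopy class away from the obstacles. Inside $C_p$, before the step the curve consists of radii emanating from $p$, grouped into ``corners'' through $p$, together with straight ``diameters'' through $p$ coming from nailed vertices or from obstacles lying in the interior of an edge; two corners through $p$ contribute $1$ to $\cros$ exactly when their four endpoint directions on $C_p$ interleave, and all the contributions at $p$ total $\binom k2$ for a vertex traversed $k$ times. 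After the step each non-nailed corner becomes a chord, and a chord-and-chord, chord-and-corner, or chord/diameter-and-diameter pair crosses at most once, doing so precisely when their endpoints on $C_p$ interleave --- here one uses that every un-shortcut corner is straight, hence has antipodal endpoint directions and meets any chord at most once, so the problematic ``both arms inside a chord's minor arc'' configuration cannot occur. Thus the local count at each $p$ is preserved, and likewise for crossings between a shortcut $\gamma$ and a shortcut $\delta$ (degenerate configurations with overlapping segments are handled by the same bookkeeping after an arbitrarily small preliminary perturbation). Hence $\cros(\rho_\gamma)\le\cros(\gamma)$ and $\cros(\rho_\gamma,\rho_\delta)\le\cros(\gamma,\delta)$.

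For the shortening step, $\gamma'$ is the concatenation of the unique shortest $P$-paths $\rho'_i$ homotopic rel the nailed vertices to the pieces $\rho_i$ of $\rho$ (and when $k=0$ it is the shortest closed curve homotopic to $\rho$). The key claim is that the collection $\{\rho'_i\}$ is in minimal position: for each pair $\rho'_i,\rho'_j$, including $i=j$, the number of crossings equals the minimum over all collections homotopic rel endpoints. Indeed, no $\rho'_i$ contains a null-homotopic sub-loop, as it could be deleted to give a shorter homotopic path; and no two sub-arcs of the $\rho'_i$ (of the same or different members) bound a disk $B$ disjoint from $P$, because the two boundary arcs of $B$ would be homotopic rel their common endpoints and each would itself be a shortest $P$-path in its class (otherwise one could splice in a shorter sub-path), so by uniqueness they would coincide, contradicting that $B$ has positive area. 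Absence of such monogons and bigons yields minimal position by the bigon criterion for systems of curves and arcs on a punctured surface (provable directly, or by lifting to the universal cover, where shortest representatives become pairwise-simple geodesics meeting at most once). Summing the per-pair bounds over all pairs of arcs of $\gamma'$, resp.\ over one arc of $\gamma'$ and one of $\delta'$, and using that the $\rho_i$ form a homotopic collection, gives $\cros(\gamma')\le\cros(\rho_\gamma)$ and $\cros(\gamma',\delta')\le\cros(\rho_\gamma,\rho_\delta)$.

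The main obstacle is making the third paragraph rigorous. One must pin down the notions of ``monogon'', ``bigon'' and ``minimal position'' appropriate to our mixed system of closed curves and arcs whose endpoints sit at punctures; reconcile the three ways crossings are counted --- the definition of $\cros$ via small transversal perturbations, the geometric intersection number of a homotopy class, and the per-pair count inside $\{\rho'_i\}$; and above all account carefully for intersections occurring \emph{at} obstacle points, both at the nailed vertices $q_i$ (where consecutive arcs meet and where the shortest paths may additionally wind) and at a generic obstacle traversed with multiplicity $k$, which the definition of $\cros$ charges $\binom k2$. Getting the bigon criterion in exactly this form, together with that bookkeeping, is the technical heart; the shortcutting analysis and the chaining are comparatively routine.
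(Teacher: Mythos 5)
Your route is genuinely different from the paper's. After reducing, as you do, to bounding $\cros$ across the shortcutting step and then across the shortening step, the paper handles the shortening step by fixing a triangulation $\mathcal T$ of $P$, snapping an optimal $(\eps^2)$-perturbation onto a type-2 curve (Lemma~\ref{lemma_snap}), reducing the $\mathcal T$-edge crossing sequence, straightening within triangles, and finally sliding vertices along $\mathcal T$-edges to within $\eps$ of $\gamma'$. The content lives in the sliding step: a crossing created by a vertex-order swap is paired with one destroyed, and the bad case of two created at once is excluded by a mix-and-match length argument together with the two-windings lemma (Appendix~\ref{app_windings}). The paper also offers a second proof via Lemma~\ref{lemma_release_type3}: release unstable visits one at a time on a type-3 curve and verify by local case analysis that each release is crossing-nonincreasing. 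Your proposal instead invokes a single global fact: the unique shortest pieces $\rho'_i$ admit no monogons or bigons (any bigon would yield two distinct shortest homotopic subpaths, contradicting uniqueness), so they are in minimal position and realize the geometric intersection number. This is arguably the most transparent of the three formulations, and it makes the analogy with geodesics on surfaces explicit.

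However, the caveats in your final paragraph are not cleanup; they are the theorem. The bigon criterion you want is stated for a transversal system of \emph{simple} arcs and curves on a surface, whereas the $\rho'_i$ are neither simple nor transversal: they may overlap along whole segments, they pass through the punctures $P$ with multiplicity, and they all meet at the nailed vertices $q_i$. Moreover $\cros(\gamma')$ is defined as a minimum over $\eps$-close transversal perturbations, and different perturbations of $\gamma'$ lie in \emph{different} homotopy classes rel $P$ (one class per side-choice at each obstacle visit). To close the chain you must (i) exhibit one such perturbation that lies in the class of $\rho$ and realizes the class's geometric intersection number with the $\binom{k}{2}$ accounting at multiply-visited obstacles, and (ii) show every transversal perturbation of $\rho$ has at least that many crossings. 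These are precisely the difficulties the paper's vertex-order tracking and two-windings lemma (or the type-3 release analysis) were built to resolve. Your outline is sound and I believe it can be carried out, but as written it defers, rather than solves, the proof's core.

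A smaller point on the shortcutting step: what you want there is the equality $\cros(\rho_\gamma)=\cros(\gamma)$, not merely $\le$. Since every transversal perturbation of $\rho_\gamma$ is also an $\eps$-close transversal perturbation of $\gamma$, the inequality $\cros(\gamma)\le\cros(\rho_\gamma)$ is automatic, and the substance is the reverse: the chord shortcuts realize the minimal interleaving count at each $p$. Your analysis does establish this, but you should state it as an equality and add a line on why a perturbation that winds around $p$ the long way (which is still $\eps$-close to the corner, hence admissible for $\cros(\gamma)$) cannot undercut the chord.
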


With the technique of Theorem~\ref{thm_crossings} we can obtain an upper bound on the number of iterations of HCS:

\begin{theorem}\label{thm_bound_by_peeling}
For a fixed obstacle set $P$, the $P$-curve $\gamma$ that maximizes the number of HCS iterations is the boundary of the convex hull of $P$.
\end{theorem}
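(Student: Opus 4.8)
The plan is to prove the stronger statement that, for every obstacle set $P$, the curve $\partial\mathrm{conv}(P)$ needs at least as many HCS iterations to collapse as any other $P$-curve, and to do this by induction on $|P|$. Write $f_P(\gamma)$ for the number of HCS steps needed to collapse a $P$-curve $\gamma$, let $P_1$ be the set of vertices of $\mathrm{conv}(P)$ (the first convex layer), and let $Q_1=P\setminus P_1$, so that the convex-layer decomposition of $Q_1$ produces all the remaining layers. Since $P_1\neq\emptyset$ when $P\neq\emptyset$, we have $|Q_1|<|P|$ and the induction is well-founded; the degenerate/base cases (for instance $P$ in convex position, where $Q_1=\emptyset$) are immediate, because then the shortcut step applied to any $P$-curve yields a curve enclosing no obstacle, which collapses to a point, so $f_P(\gamma)\le 1=f_P(\partial\mathrm{conv}(P))$.

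The heart of the argument is the claim that for every $P$-curve $\gamma$, the curve $\gamma'=\HCS_P(\gamma)$ is a $Q_1$-curve — i.e.\ no HCS step ever leaves a vertex of the curve sitting on a vertex of $\mathrm{conv}(P)$. First I would observe that the shortcutted curve $\rho$ lies in $\mathrm{conv}(P)$ (each corner-cut replaces a sub-path $a\to p\to b$ by a chord of the triangle $apb\subseteq\mathrm{conv}(P)$) and avoids every obstacle, in particular every $v\in P_1$. A closed curve lying in $\mathrm{conv}(P)$ and avoiding the boundary vertex $v$ has winding number $0$ about $v$ (a ray from $v$ leaving $\mathrm{conv}(P)$ is never crossed), and near $v$ the curve $\rho$ consists only of short cut corners, each sweeping an angle strictly less than $\pi$ about $v$; from this one checks that $\rho$ can be homotoped inside $\R^2\setminus P$ off a neighbourhood of $v$, and doing this at every hull vertex shows that $[\rho]$ is represented by a curve avoiding $P_1$. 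Hence the shortest homotopic $P$-path of each $\rho_i$ is never forced to wind around a point of $P_1$ — any apparent wrapping is released by pushing the path toward the obstacle-free exterior of $\mathrm{conv}(P)$ — so it turns only at points of $Q_1$, and its endpoints (the nailed vertices, which by collinearity cannot be hull vertices) also lie in $Q_1$. Therefore $\gamma'$ is a $Q_1$-curve. A few special configurations (collinear obstacles, points on hull edges) need separate but routine handling.

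Next I would note that once the curve is a $Q_1$-curve, applying $\HCS_P$ is the same as applying $\HCS_{Q_1}$: the shortcut step depends only on the curve, and the shortest homotopic $P$-path of each $\rho_i$ stays inside $\mathrm{conv}(Q_1)$ and so never comes near a point of $P_1$ (which, in general position, lies strictly outside $\mathrm{conv}(Q_1)$), hence coincides with the shortest homotopic $Q_1$-path. Iterating with the claim above, $\HCS_P^{(j)}(\gamma)=\HCS_{Q_1}^{(j-1)}(\HCS_P(\gamma))$ for all $j\ge 1$, so $f_P(\gamma)=1+f_{Q_1}(\HCS_P(\gamma))$. The same computation applied to $\partial\mathrm{conv}(P)$, whose HCS evolution is exactly the convex-layer decomposition so that $\HCS_P(\partial\mathrm{conv}(P))=\partial\mathrm{conv}(Q_1)$, gives $f_P(\partial\mathrm{conv}(P))=1+f_{Q_1}(\partial\mathrm{conv}(Q_1))$. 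By the induction hypothesis, $f_{Q_1}(\partial\mathrm{conv}(Q_1))$ is the maximum of $f_{Q_1}$ over all $Q_1$-curves, which is $\ge f_{Q_1}(\HCS_P(\gamma))$; hence $f_P(\gamma)\le f_P(\partial\mathrm{conv}(P))$, which closes the induction.

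The hard part will be the claim of the second paragraph. The subtlety is that a shortest homotopic $P$-path is in general allowed — and is sometimes forced — to wind around hull vertices (the curve $\partial\mathrm{conv}(P)$ itself does), so the proof must use crucially both that the shortcut step first releases the curve from every hull vertex and that there is no obstacle outside $\mathrm{conv}(P)$ to obstruct the unwrapping. Making this winding-number and local-homotopy bookkeeping rigorous — in particular for self-intersecting curves, where an arc near a hull vertex could in principle wrap around it, and in the degenerate configurations — is where the real work lies; this analysis is close in spirit to the shortcutting argument behind Theorem~\ref{thm_crossings}.
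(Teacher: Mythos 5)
Your argument takes a genuinely different route from the paper's. The paper deduces Theorem~\ref{thm_bound_by_peeling} directly from the machinery of Theorem~\ref{thm_crossings}: it observes that $\partial\mathrm{conv}(P)$ ``bounds'' any $P$-curve $\gamma_0$ (disjoinable realizations with $\widehat\gamma_0$ on the bounded side of $\widehat\delta_0$), and then checks that each subroutine in the proof of Theorem~\ref{thm_crossings} --- snapping, edge-sequence reduction, vertex sliding --- preserves this nesting, so $\partial\mathrm{conv}(P)$ cannot collapse before $\gamma_0$ does. Your proposal instead inducts on $|P|$ by pinning down where a single HCS step can land: you claim that $\HCS_P(\gamma)$ is always a $Q_1$-curve (no vertex of the output ever sits on a vertex of $\mathrm{conv}(P)$), and that $\HCS_P$ restricted to $Q_1$-curves coincides with $\HCS_{Q_1}$. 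If that claim holds, the induction closes cleanly, and it has the appeal of being more structural and not piggy-backing on Theorem~\ref{thm_crossings}. So the two approaches really are different, and yours could be a worthwhile alternative.

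However, your justification of the key claim has a real gap. You argue that the shortcutted curve $\rho$ has winding number $0$ about each hull vertex $v$ and ``from this one checks that $\rho$ can be homotoped off a neighbourhood of $v$.'' That inference is false in general in $\R^2\setminus P$. In the free group $\pi_1(\R^2\setminus P)$, a conjugacy class can have zero exponent sum in the generator $g_v$ yet not be represented by any loop avoiding $v$: the commutator $[g_v,g_q]$ is the standard counterexample, and nothing in the winding-number bookkeeping rules this out for $\rho$. (You do also mention that each corner of $\rho$ near $v$ sweeps an angle $<\pi$, which is the right local fact, but you don't convert it into an argument.) There is a second, independent gap in the same sentence: even granting that $[\rho]$ has \emph{some} representative avoiding $P_1$, it does not immediately follow that the \emph{shortest} representative avoids $P_1$; you need to know the length-minimization never pulls the curve taut against a hull vertex. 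Both gaps can be filled by running the vertex-release argument of Section~\ref{sec_vrelease} directly: the shortcutted path $\rho_i$ starts with all vertices at non-obstacle shortcut points inside $\mathrm{conv}(P)$, every release replaces a vertex $r$ with the convex chain of obstacle points in $\triangle arb\subset\mathrm{conv}(P)$, and a vertex of $\mathrm{conv}(P)$ --- being an extreme point --- can never lie in the interior of, or on an edge of, such a triangle except at the corners $a,r,b$ (which are never hull vertices by induction, since the nailed endpoints are non-extreme by collinearity and all earlier $z_j$'s are interior chain points). Thus no vertex of $P_1$ is ever introduced; this is the fact your proof needs, and it is where the real content of your alternative route lies.
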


As mentioned above, if $\gamma$ is the boundary of the convex hull of $P$, then the HCS process starting with $\gamma$ is just the convex-layer decomposition of $P$. Hence, by Theorem~\ref{thm_bound_by_peeling}, if $|P|=n$ then the HCS process starting with any $P$-curve ends in at most $n/2$ iterations. If $P = \{1,2,\ldots,\sqrt{n}\}^2$ then the process ends in at most $O(n^{2/3})$ iterations (by Har-Peled and Lidick\'y~\cite{hl-pg-13}). And if $P$ is uniformly and independently chosen at random inside a fixed convex domain, then the expected number of iterations is $O(n^{2/3})$ (by Dalal~\cite{d-co-04}).

Recall that an obstacle set $P$ is in general position if no three points of $P$ lie on a line; and recall that if $P$ is in general position then there are no nailed vertices in HCS.

\begin{theorem}\label{thm_gp}
Let $P$ be an obstacle set in general position. Let $\gamma$ be a simple $P$-curve. Then $\HCS_P(\gamma)$ is also simple. Let $\gamma_1, \gamma_2$ be disjoint $P$-curves. Then $\HCS_P(\gamma_1), \HCS_P(\gamma_2)$ are also disjoint. Hence, under HCS with obstacles in general position, a simple curve stays simple, and a pair of disjoint curves stay disjoint.
\end{theorem}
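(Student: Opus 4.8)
The plan is to obtain Theorem~\ref{thm_gp} from Theorem~\ref{thm_crossings} together with a general-position argument. By Theorem~\ref{thm_crossings}, if $\gamma$ is simple then $\cros(\gamma)=0$, so $\cros(\gamma')=0$ and $\gamma'$ is self-disjoinable (weakly simple); likewise, if $\gamma_1,\gamma_2$ are disjoint then $\cros(\gamma_1,\gamma_2)=0$, so $\cros(\gamma_1',\gamma_2')=0$ and $\gamma_1',\gamma_2'$ are disjoinable. It remains to upgrade ``weakly simple'' to ``simple'' and ``disjoinable'' to ``disjoint'' under the general-position hypothesis. The point of that hypothesis is that no nailed vertices ever occur, so the splitting substep is vacuous and $\gamma'$ is simply the shortest $P$-curve homotopic to the curve $\rho$ obtained from $\gamma$ by shortcutting every corner (and, for $\eps$ small, $\rho$ is simple whenever $\gamma$ is). In particular $\gamma'$, $\gamma_1'$, $\gamma_2'$ are \emph{taut} $P$-curves: closed chains of straight edges between obstacle points, each shortest in its free homotopy class in $\R^2\setminus P$.

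Next I would classify the ways a taut $P$-curve $\sigma$ in general position can fail to be simple. A transversal self-crossing would give $\cros(\sigma)\ge 1$, contradicting weak simplicity; so every self-crossing is non-transversal, hence of one of three kinds: (i) two edges collinear and overlapping along a subsegment, which forces four obstacle points onto a line and is impossible in general position unless only two distinct obstacles are involved, i.e.\ $\sigma$ is a ``doubled segment''; (ii) an obstacle in the relative interior of an edge, forcing three collinear obstacles, impossible; (iii) an obstacle $p$ that $\sigma$ passes through (at least) twice. Apart from the degenerate doubled-segment case, which HCS treats as a collapsed curve, only (iii) survives; and the same bookkeeping reduces the two-curve statement to the case in which $\gamma_1'$ and $\gamma_2'$ share a common vertex $p$.

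The heart of the proof, and the step I expect to be the main obstacle, is ruling out case (iii). My plan is to split $\sigma$ at $p$ into two loops $L_1,L_2$ based at $p$, use weak simplicity to push the two corners at $p$ slightly off the obstacle, and read off the forced mutual position of the resulting nearly-disjoint loops (side by side, or one nested inside the other). In each configuration a short rewriting in $\pi_1(\R^2\setminus P)$ identifies the free homotopy class of $\sigma$ with that of a \emph{simple} closed curve enclosing a \emph{strictly smaller} set of obstacles: for instance, if $L_1$ is a negatively oriented loop nested inside the positively oriented $L_2$, then the obstacles encircled by $L_1$ acquire total winding number $0$ and drop out, leaving the class of a relative convex-hull boundary around the remaining obstacles. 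But the shortest curve in that class is strictly shorter than $\sigma$ — it need not detour around the obstacles $L_1$ encircled — which contradicts the fact that $\sigma$, as the output of the shortening substep, is the shortest $P$-curve in its homotopy class (that of the simple curve $\rho$). The disjoint-curves variant is analogous: if $\gamma_1'$ and $\gamma_2'$ both run through $p$, then once the shared passage is resolved one of the two curves can be shortened within its own homotopy class, again a contradiction. I anticipate that the only delicate points are the combinatorial case analysis of the four edge-germs meeting at $p$, and checking that the ``simplified'' curve is genuinely shorter rather than of equal length — which is precisely where the degenerate doubled-segment and single-obstacle configurations have to be handled separately; the rest is routine given Theorem~\ref{thm_crossings}.
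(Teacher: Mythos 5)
Your reduction to a statement purely about the output curve is a reasonable idea: in general position there are no nailed vertices, so $\gamma'$ is exactly the shortest $P$-curve homotopic to the shortcutting $\rho$ of $\gamma$, and Theorem~\ref{thm_crossings} gives $\cros(\gamma')=0$, so you want to show that a \emph{taut} weakly simple $P$-curve (obstacles in general position) cannot visit any obstacle twice. Your case analysis of the non-transversal degeneracies is also fine, and in general position the two arcs of a simple perturbation near a doubly visited obstacle $p$ must be at different radii (``nested''), since two arcs of angular extent $\ge\pi$ at the same radius would overlap unless both equal $\pi$, which is the nailed case.

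The gap is in the step you yourself flag as ``the heart of the proof.'' Writing $\sigma=L_1L_2$ with $L_1,L_2$ loops based at $p$, you assert that the shortest curve in $[\sigma]$ ``is strictly shorter than $\sigma$ --- it need not detour around the obstacles $L_1$ encircled.'' But $\sigma$ \emph{is by hypothesis} the shortest curve in $[\sigma]$; the claim that the shortest curve is strictly shorter is exactly the conclusion you want, not a premise you can appeal to. You never exhibit a competitor curve $\tau$ homotopic to $\sigma$ with $|\tau|<|\sigma|$, and it is not enough that some representative of $[\sigma]$ avoids the inner obstacles: a non-detouring representative exists, but that alone does not make it shorter than $\sigma$. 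The naïve candidates fail: local uncrossing at $p$ (sliding the two vertices off $p$ in either of the two resolutions) changes the length only by $O(\eps)$ and, because both visits are stable, tends to lengthen the curve rather than shorten it; and replacing $\sigma$ by $L_2$ alone is not homotopy-preserving. So the argument is circular at the crucial point. The paper does \emph{not} argue this way at all. Its proof uses the triangulation machinery built up for Theorem~\ref{thm_crossings}: it tracks the curve $\eta''$ (close to the \emph{input} $\gamma$) being slid along triangulation edges to $\eta'''$ (close to $\gamma'$), picks a triangulation edge $e=qq'$ emanating from the doubly visited obstacle $q$ into the sector both visits wrap around, and derives a length contradiction by showing that certain consecutive vertices of $\eta''$ were forced across a fixed segment of $\gamma$ (or across a segment $p_3q$ of $\gamma$) during the sliding, so that $\eta'''$ could be shortened by an amount bounded away from $0$. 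In other words, the paper's contradiction crucially uses the original curve $\gamma$ and the triangulation $\mathcal T$, not just the intrinsic structure of the final curve $\gamma'$. Your plan to contradict optimality of $\gamma'$ using only $\gamma'$ itself would require a genuinely new argument that you have not supplied; as written, the main case is unproved.
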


Let $\gamma$ be a simple piecewise-linear curve with vertices $(v_0, \ldots, v_{n-1})$. Assume that the sequence of vertices is minimal, meaning no $v_{i-1},v_i,v_{i+1}$ lie on a straight line. An \emph{inflection edge} of $\gamma$ is an edge $v_iv_{i+1}$ such that the previous and next vertices $v_{i-1}, v_{i+2}$ lie on opposite sides of the line through $v_i, v_{i+1}$. Let $\infl(\gamma)$ be the number of inflection edges of $\gamma$. Note that $\infl(\gamma)$ is always even, since every inflection edge lies either after a sequence of clockwise vertices and before a sequence of counterclockwise vertices, or vice versa.

If $\gamma$ is not simple but self-disjoinable, then we define $\infl(\gamma)$ as the minimum of $\infl(\gamma')$ over all simple piecewise-linear curves $\gamma'$ that are $\eps$-close to $\gamma$, for all sufficiently small $\eps>0$. (Note that for a given $\gamma$ there might exist different curves $\gamma'$ with different values of $\infl(\gamma')$. For example, if $\gamma$ goes from a point $p$ to a point $q$ and back $n$ times, then $\gamma'$ could be a spiral with just two inflection edges, or a double zig-zag with $2n-2$ inflection edges.)

\begin{theorem}\label{thm_infl}
Let $\gamma$ be self-disjoinable, and let $\gamma'=\HCS_P(\gamma)$. Then their inflection-edge numbers satisfy $\infl(\gamma')\le \infl(\gamma)$. Hence, under HCS on a self-disjoinable curve, the curve's number of inflection edges never increases.
\end{theorem}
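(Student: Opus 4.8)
The plan is to phrase everything in terms of the cyclic sequence of turn orientations. For a vertex $v$ of a piecewise-linear curve at which the curve is not straight, let $\sgn(v)\in\{+,-\}$ record whether the turn there is counterclockwise or clockwise; after deleting the straight vertices, $\infl(\gamma)$ is precisely the number of cyclically consecutive vertex pairs $v,v'$ with $\sgn(v)\neq\sgn(v')$ (a restatement, for simple $\gamma$, of the ``opposite sides of the line'' condition). I would track this count through the three substeps of one HCS iteration. Two preliminary reductions are needed. First, reduce to $\gamma$ simple: for self-disjoinable $\gamma$ pass to a nearby simple curve realizing $\infl(\gamma)$; this needs a little care since a perturbation of a $P$-curve is usually not a $P$-curve, so one must either argue that the shortcutting/shortening data used by HCS is already determined by the nearby simple curve, or invoke continuity of the quantities involved. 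Second, note that the splitting substep exists only to make shortcutting well-defined at nailed vertices, and that after shortcutting the concatenated curve still goes straight through every nailed vertex $q_j$, so in its reduced form the $q_j$'s are absent -- exactly as in the reduced form of $\gamma$.

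Shortcutting is harmless. Replacing $x_i\to p_i\to y_i$ by $x_i\to y_i$ at an interior vertex $p_i$ of a $\delta_i$ leaves the directions of the two neighbouring edges unchanged (because $x_i$ and $y_i$ lie on them), hence leaves the turn orientations at the neighbours unchanged, and -- since the shortcut cuts the corner on the inside of the turn -- replaces the single entry $\sgn(p_i)$ by two entries both equal to $\sgn(p_i)$. Carrying this out at all interior vertices of all $\delta_i$ simultaneously therefore just duplicates each entry of the cyclic $\sgn$-sequence, which leaves the number of sign changes unchanged. Combined with the second reduction this gives $\infl(\rho)=\infl(\gamma)$ for the curve $\rho$ obtained after shortcutting, so only the shortening substep remains.

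Shortening is the crux. I would realize the passage from each $\rho_i$ to its shortest homotopic $P$-path $\sigma_i$ by a finite shortening process built from two elementary moves: a \emph{flattening}, which deletes a vertex at which the path has become straight, and a \emph{wrap}, which creates a new vertex at an obstacle the path has come to touch. A flattening deletes one entry from the $\sgn$-sequence, and for a $\pm$-valued sequence this never increases the number of sign changes (if the two surviving neighbours disagree, the deleted entry agreed with exactly one of them). A wrap inserts one entry $\sgn(o)$, and this increases the number of sign changes only in the single bad configuration where both neighbours are equal and $\sgn(o)$ is the opposite value. The key claim is that this configuration never occurs: when the taut string first touches an obstacle $o$, it touches $o$ on its concave side, so the new turn at $o$ has the same orientation as the turns of the locally convex stretch of string surrounding it, and hence agrees with whichever neighbour it would have to agree with. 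The pinned endpoints $q_j$ of the $\rho_i$'s must be treated in the same spirit: one shows that the turn appearing at a $q_j$ in $\gamma'$ is consistent with its neighbours, so the junctions contribute no more sign changes to $\gamma'$ than the corresponding nailed junctions contributed to $\gamma$. Summing over all moves and junctions yields $\infl(\gamma')\le\infl(\rho)=\infl(\gamma)$. (One could instead hope to deduce the bound from Theorem~\ref{thm_crossings} via some relation between inflections and intersection numbers, but I do not see a clean such relation and would pursue the direct argument.)

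The hard part is exactly this ``a new bend never points against the grain'' claim -- for wrap events and at pinned vertices alike: one must show that throughout the shortening process, whenever the string acquires a new vertex, the obstacle there sits on the concave side of the surrounding locally convex arc, so the new turn orientation matches its neighbours. This looks like it needs a genuine geometric argument about how the taut string develops -- e.g.\ examining, at the instant a new touch occurs, on which side of the approaching string the obstacle lies -- rather than a purely combinatorial one. And since $\infl$ for a non-simple curve is defined by perturbation, this local analysis has to be made compatible with the reduction to simple curves and with the bookkeeping at nailed vertices; that compatibility is the other place where care is required.
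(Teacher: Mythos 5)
Your plan has the right skeleton---track the cyclic sequence of turn orientations and show that each elementary shortening move cannot increase the number of sign changes---and this is indeed the backbone of the paper's argument. But the two places you flag as ``the hard part'' and ``where care is required'' are precisely where the paper concentrates its effort, and you leave them unresolved; these gaps are real. The concavity claim (``a new bend never points against the grain'') is in fact the \emph{easy} part once one uses the vertex-release algorithm of Lemma~\ref{lemma_alg} instead of a continuous taut-string model: releasing an unstable vertex $v$ with neighbours $u,w$ replaces $v$ by the chain of convex-hull vertices of the obstacles inside triangle $uvw$, and those new vertices inherit the turn orientation of $v$ by construction, with no ``first touch'' analysis needed. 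What actually requires case work is the behaviour of the \emph{neighbours} $u,w$ upon release (each can flip from counterclockwise to straight, straight to clockwise, or counterclockwise all the way to clockwise), and above all the U-turn case $u=w$, where the two visits to that obstacle merge and one must verify the merged visit's orientation is consistent (the paper's Figure~\ref{fig_release_cw}). Your flatten/wrap moves do not model the U-turn case at all, and a naive one-vertex-at-a-time taut-string evolution would have to grapple with topology changes there.

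The other gap is in your ``first reduction.'' Passing to a nearby simple curve $\widehat\gamma$ does not reduce to the simple case, because $\widehat\gamma$ is not a $P$-curve and the signs you want to count are not read off a $P$-curve's obstacle visits. You need a bridge between $\infl(\widehat\gamma)$ (counted on the simple perturbation) and a combinatorial inflection count that the vertex-release process can track. The paper builds this bridge with type-3 curves---combinatorially specified infinitesimal perturbations---together with Lemma~\ref{lemma_turn} to show $\infl(\widehat\gamma)\ge\infl_\gamma(\zeta)$ for the snapped type-3 realization $\zeta$, and Lemma~\ref{lemma_release_type3} to show vertex releases can be performed on type-3 curves while preserving simplicity and steadiness. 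That machinery is what makes the perturbative definition of $\infl$ compatible with the shortening process, and your proposal acknowledges the need for it but does not supply a substitute.
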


\section{Proofs}\label{sec_proofs}

In order to prove the properties stated in Section~\ref{sec_properties}, we rely on two different approaches for computing shortest homotopic curves. The first approach is the one given by \cite{chazelle82,hs94,lp84,lm85}, which uses a triangulation of the ambient space. The second approach consists of repeatedly releasing unstable vertices. We start by describing these two approaches in detail.

For simplicity, we restrict ourselves to piecewise-linear curves. By the relative simplicial approximation theorem (Zeeman~\cite{zeeman64}), we can restrict ourselves to piecewise-linear homotopies as well.

\subsection{Triangulations}\label{subsec_triang}

Let $P$ be a finite set of point obstacles, and let $\gamma$ be a piecewise-linear curve avoiding $P$. Assume without loss of generality that $\gamma$ is contained in the convex hull of $P$ (by adding points outside the convex hull of $\gamma$ if necessary). Let $\mathcal T$ be a triangulation of the convex hull of $P$ using the points of $P$ as vertices.

\begin{figure}
\centerline{\includegraphics{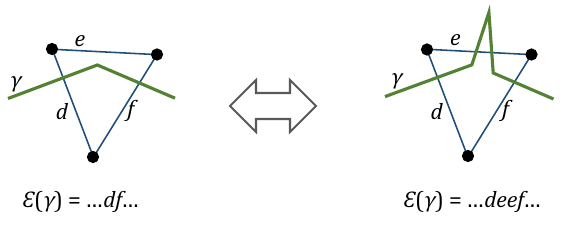}}
\caption{\label{fig_add_remove_ee} Combinatorial changes in the edge sequence of a continuously changing curve $\gamma$.}
\end{figure}

We can assume without loss of generality that the curve $\gamma$ intersects each triangle edge transversally. Let $\mathcal E=\mathcal E(\gamma)$ be the circular sequence of triangle edges intersected by $\gamma$. Then a piecewise-linear homotopic change of $\gamma$ can only have two possible types of effects on $\mathcal E$: Either an adjacent pair $ee$ is inserted somewhere in the sequence, or an existing such pair is deleted. See Figure~\ref{fig_add_remove_ee}. Hence, two curves $\gamma$, $\gamma'$ are homotopic if and only if their corresponding edge sequences $\mathcal E(\gamma)$, $\mathcal E(\gamma')$ are \emph{equivalent}, in the sense that they can be transformed into one another by a sequence of operations of these two types.

Call an edge sequence $\mathcal E(\gamma)$ \emph{reduced} if it contains no adjacent pair $ee$. Then every edge sequence is equivalent to a unique reduced sequence. (Proof sketch: Supposing for a contradiction that there exist two distinct equivalent reduced sequences $S_1, S_2$, consider a transformation of $S_1$ into $S_2$ that uses the minimum possible number of deletions, and among those, consider one in which the first deletion is done as early as possible. Then it is easy to arrive at a contradiction.)

Hence, in order to compute the shortest curve homotopic to $\gamma$, we first compute $\mathcal E(\gamma)$, then we reduce this sequence by repeatedly removing adjacent pairs, obtaining a reduced sequence $\mathcal E'$, then we place a point $x(e)$ on each $e\in\mathcal E'$, and then we slide the points $x(e)$ along their edges so as to minimize the length of the curve. This last step can be done using the ``funnel algorithm'' (see \cite{chazelle82,hs94,lp84,lm85}); we omit the details.

\begin{figure}
\centerline{\includegraphics{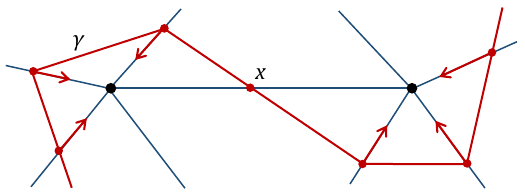}}
\caption{\label{fig_pt_not_unique}In the shortest curve homotopic to $\gamma$, the position of the point $x$ is not uniquely defined.}
\end{figure}

For the sake of completeness, we include in Appendix~\ref{app_unique} a proof that there is always a unique shortest curve. Note that, even though the shortest curve is always unique, the final positions of the points $x(e)$ are not necessarily unique. This can happen if a triangulation edge is an edge of the final curve. See Figure~\ref{fig_pt_not_unique}.

\subsection{The vertex release algorithm}\label{sec_vrelease}

We now present another simple algorithm for the shortest homotopic curve problem. This algorithm is not mentioned in any previous publication that we are aware of, but it is similar in spirit to well-known algorithms, in particular to the funnel algorithm.

As a warm-up, let us first consider the case in which the obstacles are are not single points but rather polygons. Let $\gamma$ be a piecewise-linear curve that avoids the interior of all the obstacles. Call a vertex $v=\gamma(t_i)$ of $\gamma$ \emph{unstable} if $v$ does not lie on any obstacle, or if $v$ lies on the boundary of an obstacle $T$, but $T$ lies locally on the side of $\gamma$ at which the angle is larger than $\pi$. If $v$ is unstable, then the process of \emph{releasing} $v$ is as follows: Let $u=\gamma(t_{i-1})$ and $w=\gamma(t_{i+1})$ be the previous and next vertices of $\gamma$. Suppose first that $u\neq w$. Let $\Delta$ be the triangle $uvw$, and let $S$ be the set of obstacle vertices that lie inside $\Delta$. Let $u, z_1, \ldots, z_k, w$ be the vertices of the convex hull of $S\setminus\{v\}$ in order. Then we replace $v$ by $z_1, \ldots, z_k$ in $\gamma$. The new vertices $z_1, \ldots, z_k$ are necessarily stable, but $u$ and $w$ might change from stable to unstable or vice versa. If $u=w$ then we simply remove $v$ and $w$ from $\gamma$. See Figure~\ref{fig_release} (left).

\begin{figure}
\centerline{\includegraphics{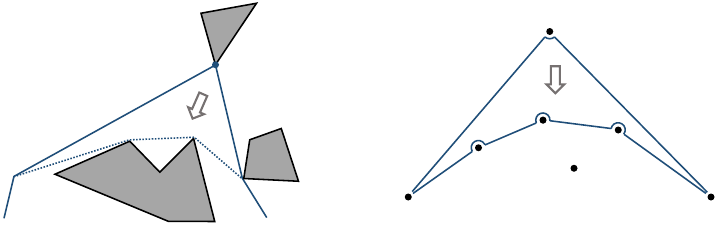}}
\caption{\label{fig_release}Releasing an unstable vertex in the presence of polygonal obstacles (left) or point obstacles (right).}
\end{figure}

Then the algorithm consists of releasing unstable vertices one by one, in an arbitrary order, until no more unstable vertices remain.

If there are also point obstacles, then the algorithm becomes slightly more complicated. For each curve vertex $v=\gamma(t_i)$ that lies on a point obstacle, we need to remember the corresponding signed angle $\alpha_i$ that the curve turns around the obstacle, since this angle could be larger than $2\pi$ in absolute value. The angle $\alpha_i$ is always congruent modulo $2\pi$ to $\angle uvw$, where $u=\gamma(t_{i-1})$ and $w=\gamma(t_{i+1})$ are the previous and next vertices. The vertex $v$ is unstable if and only if $|\alpha_i|<\pi$.

If $v$ is unstable, then in order to release it we proceed as described above, and we update the angles as follows (see Figure~\ref{fig_release}, right):
\begin{itemize}
\item If $u\neq w$ then we give to each new vertex $z_j$ the unique appropriate angle $\beta_j$ that has the opposite sign of $\alpha_i$ and satisfies $\pi\le|\beta_j|<2\pi$. We then update the angles $\alpha_{i-1}$ and $\alpha_{i+1}$ at $u$ and $w$ as follows: Denote $z_0=u$ and $z_{k+1} = w$ (in order to handle properly the case $k=0$). We add to $\alpha_{i-1}$ the angle $\angle vuz_1$, and we add to $\alpha_{i+1}$ the angle $\angle z_kwv$.
\item If $u=w$ then we update $\alpha_{i-1}$ by adding to it the angle $\alpha_{i+1}$.
\end{itemize}

\begin{lemma}\label{lemma_alg}
The vertex release algorithm always finds the shortest curve $\delta$ homotopic to the given piecewise-linear curve $\gamma$, irrespective of the order of release of the vertices.
\end{lemma}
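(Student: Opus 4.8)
\emph{Proof plan.} The strategy is to isolate three facts, from which the lemma follows at once. \textbf{(i)} Each release operation replaces the current curve by a \emph{strictly shorter} curve that is \emph{homotopic} to it. \textbf{(ii)} Consequently the algorithm terminates after finitely many releases, regardless of the order in which the unstable vertices are processed. \textbf{(iii)} A piecewise-linear curve that has no unstable vertex is the unique shortest curve in its homotopy class. Granting these, the lemma is immediate: a release is defined for every unstable vertex, so by (ii) the algorithm — run in any order — halts at a curve with no unstable vertex, and by (iii) that curve is \emph{the} shortest homotopic curve, which in particular is independent of the order of releases.

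For \textbf{(i)} and \textbf{(ii)}: a release of $v$ deforms the curve only inside the triangle $\Delta=uvw$ (or, when $u=w$, along the segment $uv$), and the region swept during the deformation contains no obstacle in its interior — every obstacle inside $\Delta$ other than $v$ lies on or beyond the new chain $u,z_1,\dots,z_k,w$, and $v$ itself is never crossed because the curve is pushed away from it, which is exactly what the instability condition $|\alpha_i|<\pi$ guarantees; the prescribed updates of $\alpha_{i-1},\alpha_{i+1}$ and of the new angles $\beta_j$ record how the winding around the surviving obstacles is redistributed, so the homotopy class is preserved. Strict shortening is clear, since the convex-hull chain from $u$ to $w$ avoiding $v$ is strictly shorter than $u,v,w$ (if $v$ lay on segment $uw$ the angle at $v$ would be $\pi$ and $v$ would be stable). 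For termination: the number of curve vertices \emph{not} lying on an obstacle never increases under a release — a released non-obstacle vertex is deleted and every new vertex $z_j$ is an obstacle point — and it strictly decreases whenever a non-obstacle vertex is released. Hence after finitely many steps the curve is a $P$-curve and remains one; from then on each step strictly decreases the length, and since a $P$-curve of length at most $L$ has at most $L/\dmin$ edges (where $\dmin$ is the minimum interpoint distance of $P$) and bounded winding at each vertex, there are only finitely many such curves, so the process must stop.

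For \textbf{(iii)}, the heart of the matter, I would reduce to the triangulation characterization of Section~\ref{subsec_triang}. Let $\delta$ have no unstable vertex, so all its vertices lie on obstacles and satisfy $|\alpha_i|\ge\pi$, and fix a triangulation $\mathcal T$ of the convex hull of $P$. First, $\mathcal E(\delta)$ is already reduced: an adjacent pair $ee$ would force $\delta$ to enter a triangle $T$ through an edge $e$ and exit through the same $e$, hence to turn inside $T$; but the only obstacles in $\overline{T}$ are its three vertices, and a turn of $\delta$ at such a vertex that keeps $\delta$ inside $T$ necessarily has turning angle of absolute value $<\pi$, i.e.\ is unstable — a contradiction. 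Since the reduced representative of an edge-sequence class is unique, $\delta$ and the shortest homotopic curve $\delta^*$ have the same reduced sequence $\mathcal E'$, so both are described by choices of one point $x(e)\in e$ for $e\in\mathcal E'$, and the length is a convex function of these points. Moreover $\delta$ is a critical point of this function: moving some $x(e)$ could decrease the length to first order only if the vertex of $\delta$ on $e$ were a free vertex (impossible, as the edges of $\mathcal E'$ carry no obstacles in their interiors) or if $\delta$ bent there with angle $<\pi$ on the side into which $x(e)$ may move (impossible by stability). A critical point of a convex function is a global minimum, so $\mathrm{length}(\delta)=\mathrm{length}(\delta^*)$, and by uniqueness of the shortest homotopic curve (Appendix~\ref{app_unique}) we conclude $\delta=\delta^*$.

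\textbf{Expected main obstacle.} Step (iii) is where the real work lies, and the delicate part of it is the reducedness claim together with the ``critical point'' claim in the degenerate situations where $\delta$ passes exactly through a vertex of $\mathcal T$ or runs along an edge of $\mathcal T$ — situations that cannot be avoided when $P$ is not in general position, because of nailed vertices (turning angle exactly $\pi$). I expect to deal with these either by choosing $\mathcal T$ generically with respect to the finitely many curves that can occur, or by replacing the global convexity argument at those places with a direct local computation showing that whenever a curve fails to be locally shortest, some vertex with $|\alpha_i|<\pi$ must appear.
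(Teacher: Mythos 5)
Your proposal follows essentially the same route as the paper: termination is established by observing that each release strictly shortens the curve and never increases the number of vertices off obstacles, and the final curve is then identified as the unique shortest homotopic curve via the triangulation/convexity argument. The main difference is one of exposition rather than substance: the paper simply asserts that the terminal curve is locally shortest and then invokes the uniqueness lemma of Appendix~\ref{app_unique} as a black box, whereas you unfold the content of that lemma inside your step (iii) — reduced edge sequence, convexity of length in the $x(e)$ coordinates, critical point implies global minimum — and you also spell out why a release preserves the homotopy class. You correctly flag the one genuine delicacy, namely that a $P$-curve passes through triangulation vertices so $\mathcal E(\delta)$ and the ``critical point'' claim need a perturbation or a direct local argument; the paper's statement that the terminal curve is ``locally shortest'' glosses over exactly the same point, so your treatment is, if anything, somewhat more careful.
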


\begin{proof}
The vertex-release algorithm ends in a finite number of iterations, since at each iteration the length of the curve strictly decreases and the number of vertices not touching obstacles never increases. Hence, there exists a finite number of distinct possible curves the algorithm might go through.

The final curve $\delta$ is clearly homotopic to the initial curve $\gamma$. Moreover, the final curve is locally shortest, in the sense that for every sufficiently small $\eps>0$, every other curve $\delta'$ homotopic to $\gamma$ and $\eps$-close to $\delta$ is longer than $\delta$. But in the Euclidean plane with polygonal or point obstacles, for every curve there exists a unique locally shortest curve homotopic to it. (For completeness, we include the proof of this fact in Appendix~\ref{app_unique}.) Hence, $\delta$ is indeed the shortest curve homotopic to $\gamma$.
\end{proof}

\subsection{Proof of Lemmas~\ref{thm_affine}--\ref{thm_curvature}}

Lemmas~\ref{thm_affine}--\ref{thm_curvature} follow easily from the vertex-release algorithm (Lemma~\ref{lemma_alg}).

\begin{replemma}{thm_affine}
HCS is invariant under affine transformations. Namely, if $P$ is a set of obstacle points, $\gamma$ is a $P$-curve, and $T$ is a non-degenerate affine transformation, then $T(\HCS_P(\gamma))=\HCS_{T(P)}(T(\gamma))$.
\end{replemma}

\begin{proof}
The claim follows from the fact that shortest homotopic curves and paths are invariant under affine transformations. Namely, let $\gamma$ be a curve or path in the presence of obstacle points $P$, let $\delta$ be the shortest curve or path homotopic to $\gamma$, and let $T:\R^2\to\R^2$ be a non-degenerate affine transformation. Then the shortest curve or path homotopic to $T(\gamma)$ in the presence of $T(P)$ is $T(\delta)$. This, in turn, follows from the fact that $T$ does not affect whether a vertex is stable or unstable, and furthermore, if a vertex is unstable, then it does not matter whether we first release the vertex and then apply $T$, or do these operations in the opposite order.
\end{proof}

\begin{replemma}{thm_convex}
Let $\gamma$ be a simple $P$-curve, and let $\gamma'=\HCS_P(\gamma)$. If $\gamma$ is the boundary of a convex polygon, then so is $\gamma'$. Hence, under HCS, once a curve becomes the boundary of a convex polygon, it stays that way.
\end{replemma}

\begin{proof} 
This follows from the fact that the property of being the boundary of a convex polygon is preserved by each vertex release.
\end{proof}

\begin{replemma}{thm_curvature}
Let $\gamma$ be a $P$-curve, and let $\gamma'=\HCS_P(\gamma)$. Let $\alpha, \alpha'$ be the total absolute curvature of $\gamma, \gamma'$, respectively. Then $\alpha \ge \alpha'$. Hence, under HCS, the total absolute curvature of a curve never increases.
\end{replemma}

\begin{proof}
Given a piecewise-linear curve $\gamma$ with vertices $(p_0, \ldots, p_{m-1})$, let $v_i\in\So$ be the unit vector parallel to $\overrightarrow{p_i p_{i+1}}$ for each $i$ . Call a tour of $\So$ \emph{valid} if it visits the vectors $v_0, v_1, \ldots, v_{m-1}, v_0$ in this order. Then the total absolute curvature of $\gamma$ equals the length of the shortest valid tour of $\So$.

Now let $\gamma$ be a given $P$-curve, and let $\gamma'=\HCS_P(\gamma)$. Recall that $\gamma'$ is obtained from $\gamma$ by a series of vertex releases. Each vertex release replaces two adjacent vectors $v_i, v_{i+1}\in \So$ by a certain number $k\ge 1$ of vectors $w_1, \ldots, w_k$ lying between them, in this order. Hence, the shortest valid tour of $\So$ for the old vector sequence goes from $v_i$ to $v_{i+1}$ through $w_1, \ldots, w_k$, and hence this tour is also valid for the new vector sequence.
\end{proof}

\subsection{Proof of Theorems~\ref{thm_crossings}--\ref{thm_gp}}\label{sec_crossings}

The proof of Theorems~\ref{thm_crossings}--\ref{thm_gp} is based on the triangulation technique. Let $\gamma$ be a $P$-curve, let $\eps>0$ be small enough, and let $\widehat\gamma$ be a self-transversal curve that is $\eps$-close to $\gamma$ and has the minimum possible number of self-intersections. In order to prove Theorem~\ref{thm_crossings} regarding intersection and self-intersection numbers, we proceed as follows:
\begin{enumerate}
\item We show that, without loss of generality, we can assume that $\widehat\gamma$ passes through the ``correct side'' of each non-nailed obstacle, as in the ``shortcutting'' step of HCS.
\item We modify $\widehat\gamma$ homotopically, by first eliminating repetitions in its edge sequence $\mathcal E$ and then sliding its vertices along the triangulation edges, until each vertex comes within $\eps$ of its final position as given by $\gamma'=\HCS_P(\gamma)$. We show that the number of self-intersections never increases in the process.
\end{enumerate}
The case of two curves is similar.

In order to do the first step, we define a type of curves that are $\eps$-close to $P$-curves and pass through the ``correct side'' of non-nailed obstacles. We call them \emph{type-2 curves}. We also define a ``snapping'' operation, which transforms $\widehat\gamma$ into a type-2 curve without increasing its number of self-intersections.

\begin{figure}
\centerline{\includegraphics{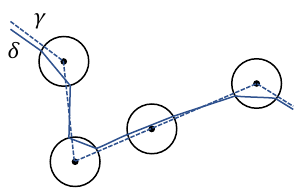}}
\caption{\label{fig_type2}A $P$-curve $\gamma$ and a corresponding type-2 curve $\delta$.}
\end{figure}

\paragraph{Type-2 curves.} Let $\gamma$ be a $P$-curve, let $(p_0, \ldots, p_{k-1})$ be the circular list of obstacles visited by $\gamma$, and let $\eps>0$ be small enough. For each $p\in P$, let $C_p$ be a circle of radius $\eps$ centered at $p$. For each $i$, let $x_i\in C_{p_i}$ be a point at distance at most $\eps^2$ from the segment $p_{i-1}p_i$, and let $y_i\in C_{p_i}$ be a point at distance at most $\eps^2$ from the segment $p_ip_{i+1}$. Then a type-2 curve $\delta$ corresponding to $\gamma$ travels in a straight line from $y_{i-1}$ to $x_i$ and then in a straight line from $x_i$ to $y_i$ for each $i$. See Figure~\ref{fig_type2}. We call each segment $y_{i-1}x_i$ a \emph{long part} and each segment $x_iy_i$ a \emph{short part}. If $\angle p_{i-1}p_ip_{i+1} \neq \pi$ and $\eps$ is chosen small enough, then $p_i$ lies on the side of the curve at which the angle is larger than $\pi$. If $\angle p_{i-1}p_ip_{i+1} = \pi$ then the corresponding short part passes within distance $\eps^2$ of $v_i$.

\paragraph{The snapping operation.} Let $\gamma$ be a $P$-curve, and let $\widehat\gamma$ be a curve $(\eps^2)$-close to $\gamma$. We define the type-2 curve $\snap(\widehat\gamma)$ as follows. For each $p_i$ visited by $\widehat\gamma$ there exists a point $z_i$ in $\widehat\gamma$ that is within distance $\eps^2$ of $p_i$. Let $y_i$ be the first intersection of $\widehat\gamma$ with $C_{p_i}$ that comes after $z_i$, and let $x_i$ be the last intersection of $\widehat\gamma$ with $C_{p_i}$ that comes before $z_i$. (Thus, the part of $\widehat\gamma$ between $x_i$ and $y_i$ is entirely contained in the disk bounded by $C_{p_i}$.) Then we let $\snap(\widehat\gamma)$ be the type-2 curve that uses these points $x_i$, $y_i$ for all $i$ as vertices.

\begin{lemma}\label{lemma_snap}
Let $\gamma$ be a $P$-curve, and let $\widehat\gamma$ be a self-transversal curve $(\eps^2)$-close to $\gamma$, such that no self-intersection of $\widehat\gamma$ occurs on any circle $C_p$. Then the curve $\delta=\snap(\widehat\gamma)$ also is also self-transversal, and it satisfies $\cros(\delta)\le \cros(\widehat\gamma)$.

Similarly, let $\gamma_1, \gamma_2$ be $P$-curves, and let $\widehat\gamma_1$, $\widehat\gamma_2$ be transversal curves $(\eps^2)$-close to them, respectively, such that no intersection between $\widehat\gamma_1$ and $\widehat\gamma_2$ occurs on any circle $C_p$. Then the curves $\delta_1=\snap(\widehat\gamma_1),\delta_2=\snap(\widehat\gamma_2)$ are transversal and satisfy $\cros(\delta_1,\delta_2)\le\cros(\widehat\gamma_1,\widehat\gamma_2)$.
\end{lemma}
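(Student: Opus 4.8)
The plan is to show that $\delta=\snap(\widehat\gamma)$ is obtained from $\widehat\gamma$ by \emph{straightening it inside a collection of convex regions}, and to combine this with the following elementary fact, which I would prove first.

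\medskip
\noindent\emph{Convex chord lemma.} Let $K\subset\R^2$ be convex, let $a_1,\dots,a_r$ be finitely many generic arcs contained in $K$ whose endpoints all lie on $\partial K$, and let $b_j$ be the straight chord joining the two endpoints of $a_j$. Then $\sum_{j\le\ell}(\text{crossings of }b_j,b_\ell)\le\sum_{j\le\ell}(\text{crossings of }a_j,a_\ell)$, where the diagonal terms $j=\ell$ count self-crossings; the same holds for sums over two families of arcs. The reason: two chords of a convex set cross at most once, and they cross iff their four endpoints interleave along $\partial K$; and if the endpoints of $a_j,a_\ell$ interleave, then the mod-$2$ intersection number of these two arcs in the disk $K$ is $1$, so $a_j$ and $a_\ell$ cross an odd -- hence positive -- number of times. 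A single chord has no self-crossing, which handles the diagonal. Summing over pairs gives the inequality.

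\medskip
\noindent\emph{The cells.} Fix $\eps$ small and an auxiliary width $\rho$ with $\eps\ll\rho\ll\dmin$, where $\dmin$ is the minimum distance between obstacles visited by $\gamma$. The cells are: for each such obstacle $p$, the disk $D_p$ bounded by the circle $C_p$ (pairwise disjoint, as $\eps<\dmin/2$); and for each ``long part'' $i$ of $\gamma$, the stadium $T_i$ equal to the $\rho$-neighbourhood of the $i$-th segment of $\gamma$ (convex). For generic small $\rho$: each $T_i$ contains no obstacle besides the two endpoints of its segment; $T_i\cap D_p\neq\emptyset$ only when $p$ is an endpoint of segment $i$; and $T_i\cap T_j$ $(i\neq j)$ is nonempty only when segments $i,j$ share an endpoint (overlap near that obstacle) or cross transversally at a point bounded away from all obstacles (overlap a small convex region $Q_{ij}$ disjoint from every $D_p$). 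Since $\widehat\gamma$ is $(\eps^2)$-close to $\gamma$, each short subpath of $\widehat\gamma$ (the arc from $x_i$ to $y_i$, which is a whole component of $\widehat\gamma\cap\mathrm{int}\,D_{p_i}$) lies in $D_{p_i}$, and each long subpath (from $y_{i-1}$ to $x_i$) lies in $T_i$. By construction $\delta$ decomposes in the same way: the short part $x_iy_i$ is a chord of $D_{p_i}$, and the long part $y_{i-1}x_i$ is a segment contained in $T_i$ that meets $\overline{D_{p_{i-1}}}\cup\overline{D_{p_i}}$ only at its endpoints.

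\medskip
\noindent\emph{Assembling the bound.} The key observation is that every self-intersection of $\delta$ is of one of two kinds: (i) a crossing between two short parts at the \emph{same} obstacle $p$, i.e.\ a pair of interleaving chords of $D_p$; or (ii) a crossing between two long parts whose underlying segments cross, which lies in the corresponding $Q_{ij}$; and regions of type (i) and (ii) are pairwise disjoint. Now assign every self-intersection of $\widehat\gamma$ to a single cell: to $D_p$ if it lies in $D_p$; otherwise to $Q_{ij}$ if it lies in $T_i\cap T_j$; otherwise to the unique $T_i$ containing it. By the hypothesis that no self-intersection of $\widehat\gamma$ lies on any $C_p$, this is well defined, and $\cros(\widehat\gamma)$ is the sum over cells of the numbers of self-intersections assigned to each. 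Inside $D_p$ the arcs of $\widehat\gamma$ are the short subpaths at $p$ together with possible ``dip'' arcs of adjacent long subpaths; the convex chord lemma in $D_p$ bounds the number of interleaving pairs among \emph{all} their chords -- a fortiori among the short-part chords alone, which is exactly the number of type-(i) crossings of $\delta$ in $D_p$ -- by the number of self-intersections of $\widehat\gamma$ assigned to $D_p$. Inside $Q_{ij}$ the convex chord lemma bounds the at-most-one type-(ii) crossing of $\delta$ by the self-intersections of $\widehat\gamma$ there. The cells $T_i$ contribute nothing on the $\delta$ side. Summing the per-cell inequalities gives $\cros(\delta)\le\cros(\widehat\gamma)$. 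Self-transversality of $\delta$ holds for generic $\eps$ (the $x_i,y_i$ are then in general position), and the two-curve statement is identical, tracking crossings between a $\widehat\gamma_1$-arc and a $\widehat\gamma_2$-arc inside each cell.

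\medskip
\noindent The part I expect to be the real work is the geometric bookkeeping of the second and third steps: checking that for $\eps\ll\rho\ll\dmin$ the cells behave exactly as claimed -- in particular that the long parts of $\delta$ do not re-enter the disks, that the ``dip'' arcs of $\widehat\gamma$ near a segment endpoint are harmless because they too are straightened to chords of that disk and can only \emph{add} interleaving pairs, and that the dichotomy (i)/(ii) for self-intersections of $\delta$ is exhaustive. One must also dispose of the degenerate configurations -- an obstacle in the interior of a segment of $\gamma$, a nailed vertex, an angle at an obstacle close to $\pi$ -- which is precisely why the lemma is stated for $C_p$-avoiding $(\eps^2)$-close curves rather than for $\gamma$ itself.
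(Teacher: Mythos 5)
Your convex-cell decomposition is appealing and handles the disk-case cleanly (the short part of $\delta$ inside $D_p$ really is the chord of the corresponding short subpath of $\widehat\gamma$, so the convex chord lemma applies exactly there). But your claim that ``$T_i\cap T_j$ $(i\neq j)$ is nonempty only when segments $i,j$ share an endpoint or cross transversally at a point bounded away from all obstacles'' is false, and the omission is the crux of the lemma. A $P$-curve can traverse the same geometric segment $pq$ several times --- for instance, $\gamma$ may visit $\ldots,p,q,p,\ldots$ or wind back and forth along $pq$ --- and then $T_i = T_j$ is an entire stadium, not a small convex cell $Q_{ij}$; after removing the disks $D_p\cup D_q$ the relevant region is not convex, so the convex chord lemma has no cell to act in. These long--long crossings are real: two long parts of $\delta$ both joining $C_p$ to $C_q$ cross exactly when their endpoints interleave on the two circles, and this is entirely possible. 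The paper's proof spends essentially all of its effort on exactly this case, via a parity argument: if the two long parts of $\widehat\gamma$ do \emph{not} cross, a homotopy from the corresponding $\widehat\gamma$-pieces to the $\delta$-pieces preserves intersection parity, so the odd crossing that appears between the $\delta$-long-parts must be compensated by an odd number of crossings lost between a long part and an adjacent half-short part of $\widehat\gamma$. Your write-up has no mechanism for extracting a $\widehat\gamma$-crossing in that situation.

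There is a second, smaller issue even in the transversal-crossing case: within $Q_{ij}$ the long parts of $\delta$ are \emph{not} the chords of the long arcs of $\widehat\gamma\cap Q_{ij}$, because their endpoints on $\partial Q_{ij}$ are different points; so the convex chord lemma as you state it (chord with the same endpoints as the arc) does not directly give the inequality. One can repair this by observing that for $\eps\ll\rho$ the endpoints on $\partial Q_{ij}$ lie in the same cyclic positions, or more simply by a mod-$2$ argument localized near the transversal crossing of the two $\gamma$-segments, but the step needs to be argued rather than absorbed into the lemma. The first issue, however, is the genuine missing idea: some version of the paper's parity/homotopy argument for coinciding (or overlapping collinear) segments is unavoidable, and a purely cell-local convexity argument cannot replace it.
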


\begin{figure}
\centerline{\includegraphics{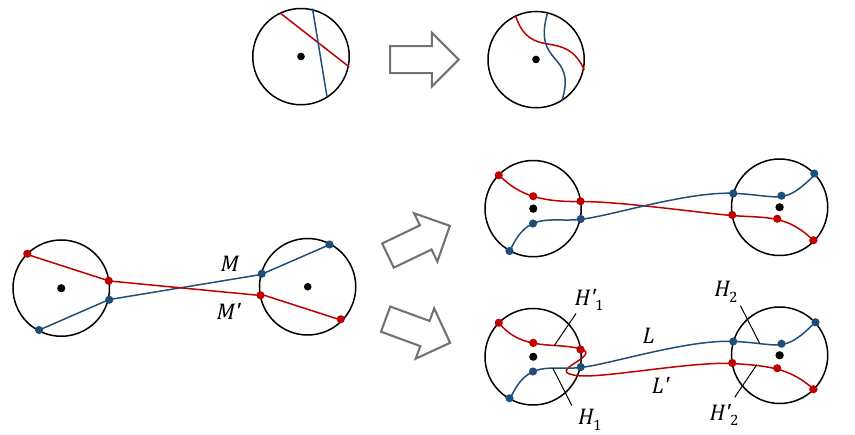}}
\caption{\label{fig_trace_back}Each self-intersection of $\delta$ can be traced back to a self-intersection of $\widehat\gamma$. Different portions of $\delta$ and $\widehat\gamma$ are shown in different colors.}
\end{figure}

\begin{proof}
We will show that each self-intersection of $\delta$ can be mapped to a self-intersection of $\widehat\gamma$, such that different self-intersections of $\delta$ are mapped to different self-intersections of $\widehat\gamma$.

The points $x_i$, $y_i$ define a partition of $\widehat\gamma$ into \emph{long} and \emph{short parts} corresponding to the long and short parts of $\delta$. On each short part of $\widehat\gamma$ inside a circle $C_{p_i}$, pick a point $z_i$ that is at distance at most $2\eps^2$ from $p_i$, and such that $\widehat\gamma$ passes through $z_i$ only once. Each point $z_i$ partitions its short part into two \emph{half-short parts}.

Now, suppose two short parts of $\delta$ within the same circle $C_{p_i}$ intersect. Then the two corresponding short parts of $\widehat\gamma$ must also intersect. See Figure~\ref{fig_trace_back} (top).

Next, suppose two long parts of $\delta$ intersect. Then either they connect two different pairs of circles, or they both connect a circle $C_{p}$ to a circle $C_{q}$. In the first case, trivially the two corresponding long parts of $\widehat\gamma$ also intersect. In the second case, if the two corresponding long parts of $\widehat\gamma$ intersect then we are done. So suppose this is not the case. We claim that in this case, one of the long parts of $\widehat\gamma$ must intersect one of the half-short parts adjacent to the other long part. See Figure~\ref{fig_trace_back} (bottom).

Let $L, L'$ be the two long parts of $\widehat\gamma$. Let $H_1$, $H_2$ be the half-short parts adjacent to $L$, and let $H'_1, H'_2$ be the half-short parts adjacent to $L'$. Let $V$ be the concatenation of $H_1, L, H_2$, and let $V'$ be the concatenation of $H'_1, L', H'_2$. Let $M$, $M'$ be the long parts of $\delta$ corresponding to $L, L'$. Let $W$ be the concatenation of $H_1, M, H_2$, and let $W'$ be the concatenation of $H'_1, M', H'_2$. Since $\delta$ is $(2\eps^2)$-close to $\widehat\gamma$, there exists a homotopy from $V$ to $W$ that does not pass through any endpoint of $V'$, and there exists a homotopy from $V'$ to $W'$ that does not pass through any endpoint of $V$. Therefore, the number of intersections between $V$ and $V'$ has the same parity as the number of intersections between $W$ and $W'$. Since $L$ and $L'$ do not intersect but $M$ and $M'$ intersect once, a single intersection is created in the transition from $V, V'$ to $W, W'$. Hence, an odd number of intersections must have been lost. These intersections must have been between a long part of one curve and a half-short part of the other curve, as claimed.

This finishes the proof that $\cros(\delta)\le\cros(\widehat\gamma)$. A similar argument applies for the case of two curves.
\end{proof}

\begin{reptheorem}{thm_crossings}
Let $\gamma$ be a $P$-curve, and let $\gamma'=\HCS_P(\gamma)$. Then their self-intersection numbers satisfy $\cros(\gamma')\le \cros(\gamma)$. Let $\delta$ be another $P$-curve, and let $\delta'=\HCS_P(\delta)$. Then their intersection numbers satisfy $\cros(\gamma',\delta')\le\cros(\gamma,\delta)$. In particular, if $\gamma$ is self-disjoinable, so is $\gamma'$, and if $\gamma,\delta$ are disjoinable, then so are $\gamma',\delta'$. Hence, under HCS, the intersection and self-intersection numbers never increase.
\end{reptheorem}

\begin{proof}
Let $\gamma$ be a $P$-curve, let $\eps>0$ be small enough, and let $\widehat\gamma$ be a self-transversal curve that is $(\eps^2)$-close to $\gamma$ and has the minimum possible number of self-intersections. Fix a triangulation $\mathcal T$ of $P$. Assume without loss of generality that $\widehat\gamma$ does not pass through any obstacle, and that no self-intersection of $\gamma$ lies on any edge of $\mathcal T$. Let $\eta=\snap(\widehat\gamma)$. Hence, $\eta$ is a self-transveral curve that is $\eps$-close to $\gamma$, passes through the ``correct side'' of each non-nailed obstacle, and does not have more self-intersections than $\widehat\gamma$. Partition $\eta$ into paths $\eta_0, \ldots, \eta_{k-1}$ that are $\eps$-close to the corresponding paths $\delta_0, \ldots, \delta_{k-1}$ of the HCS ``splitting'' step, by introducing split points as follows: For each nailed visit to an obstacle $p\in P$, we choose a split point that is within distance $O(\eps)$ of $p$ and lies on a triangle edge (where the implicit constant depends only on $P$).

\begin{figure}
\centerline{\includegraphics{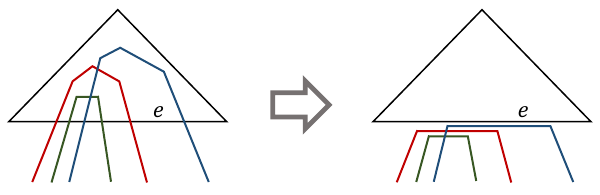}}
\caption{\label{fig_remove_ee}Reducing a curve's edge sequence without increasing its number of self-intersections. Different portions of the curve are shown in different colors.}
\end{figure}

Then we modify each $\eta_i$ into a homotopic path $\eta'_i$ whose edge sequence $\mathcal E(\eta'_i)$ is reduced. We do this without increasing the number of intersections, by repeatedly doing the following: Let $e$ be triangulation edge such that $ee$ appears one or more times in the sequences $\mathcal E(\eta'_i)$. We shortcut the corresponding paths $\eta'_i$ so as to not cross $e$ at all, instead keeping a small distance from $e$. We make the distance to $e$ inversely related to the distance between the two crossing points of $\eta'_i$ with $e$. See Figure~\ref{fig_remove_ee}.

Next, we modify each $\eta'_i$ into $\eta''_i$ by straightening out each part within each triangle of $\mathcal T$. Hence, each $\eta''_i$ is determined by the position of its vertices $x(e)$ along the triangle edges $e$.

Let $\eta''$ be the curve formed by concatenating the paths $\eta''_i$ for all $i$. By appropriately sliding the non-endpoint vertices of the paths $\eta''_i$ along their triangulation edges, we can bring $\eta''$ to be $\eps$-close to $\gamma'$. However, we must be careful to perform this sliding without unnecessarily switching the order of any pair over vertices along the same edge. Meaning, whenever $\gamma'$ contains several vertices along an edge that coincide, we place the corresponding vertices of $\eta''$ within $\eps$ of each other, conserving the order they had before the sliding. Let $\eta'''$ be the curve obtained after sliding the vertices this way. Hence, $\eta'''$ is $\eps$-close to $\gamma'$. Table~\ref{table_curves} summarizes the construction of the different curves.

\begin{table}
\centerline{\begin{tabular}{c|l}
Curve&Description\\\hline
$\gamma$&Given $P$-curve\\
$\widehat\gamma$&Self-transversal curve $(\eps^2)$-close to $\gamma$\\
$\eta$&$\snap(\widehat\gamma)$\\
$\eta'$&Result of reducing edge sequences of paths $\eta_i$ in $\eta$\\
$\eta''$&Result of straightening $\eta'$ within triangles of $\mathcal T$\\
$\eta'''$&Result of sliding vertices of $\eta''$ along $\mathcal T$-edges; $\eps$-close to $\gamma'$\\
$\gamma'$&$\HCS(\gamma)$
\end{tabular}}
\caption{\label{table_curves}Curves used in the proofs of Theorems~\ref{thm_crossings}--\ref{thm_gp}.}
\end{table}

We now claim that the number of self-intersections of $\eta'''$ is not larger than that of $\eta''$.

Let $(x_0, \ldots, x_{m-1})$ be the circular list of vertices of $\eta''$, and let $(y_0, \ldots, y_{m-1})$ be the corresponding vertices of $\eta'''$, meaning that for each $j$, both vertices $x_j, y_j$ lie on the same triangulation edge $e_j$. Consider a self-intersection $z$ lying in some triangle $T\in\mathcal T$, such that $z$ exists in $\eta'''$ but not in $\eta''$. In other words, we have $z = y_j y_{j+1} \cap y_k y_{k+1}$ for some $j,k$, whereas $x_j x_{j+1}\cap x_k x_{k+1}=\emptyset$. This means that one pair of vertices, say $y_{j+1}, y_{k+1}$, lie on the same edge $e_{j+1}=e_{k+1}$ of $T$ and switched their order in the transition from $\eta''$ to $\eta'''$, while the other two vertices $y_j,y_k$ lie on the two other edges of $T$, or else they both lie on the same edge of $T$ but did not switch their order. Let $\ell\ge 1$ be the unique integer such that for all $1\le \ell'\le \ell$, the vertices $y_{j+\ell'}, y_{k+\ell'}$ lie on the same edge $e_{j+\ell'}=e_{k+\ell'}$ and switched order, while this is not true of $y_{j+\ell+1}, y_{k+\ell+1}$. Note that no pair of vertices $(y_{j+\ell'}, y_{k+\ell'})$, $1\le\ell'\le \ell$ can be $\eps$-close to each other, because then we would not have switched their order when transforming $\eta''$ into $\eta'''$.

For each index $m'$, let $s_{m'}, t_{m'}$ be the segments $s_{m'} = x_{m'}x_{m'+1}$ and $t_{m'}=y_{m'}y_{m'+1}$. Then for all $1\le \ell'<\ell$, either none or both of the self-intersections $s_{j+\ell'}\cap s_{k+\ell'}$, $t_{j+\ell'}\cap t_{k+\ell'}$ exist. In contrast, exactly one of the self-intersections $s_{j+\ell}\cap s_{k+\ell}$, $t_{j+\ell}\cap t_{k+\ell}$ exists. Meaning, we have associated the self-intersection $z$ that was \emph{created} to another self-intersection $z'$ that was either created or \emph{destroyed}. See Figure~\ref{fig_created_destroyed} (top). Furthermore, this association is one-to-one, since, if $z'$ was also created, then it is associated back to $z$. We now show that $z'$ must have been destroyed, not created.

\begin{figure}
\centerline{\includegraphics{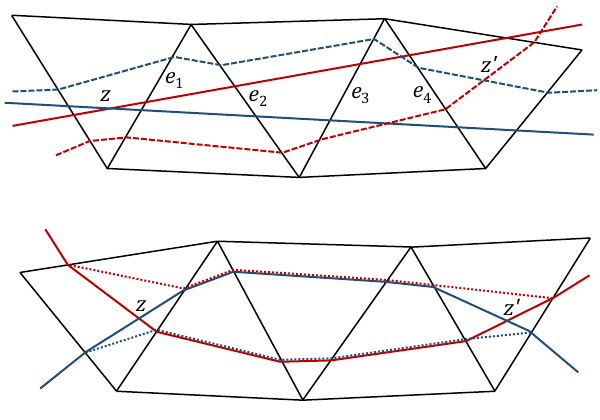}}
\caption{\label{fig_created_destroyed}Top: In the transition from $\eta''$ (dashed curves) to $\eta'''$ (solid curves), vertices at $e_1,e_2,e_3,e_4$ had their order swapped, so self-intersection $z$ was created while self-intersection $z'$ was destroyed. Bottom: If both $z$ and $z'$ were created, then interchanging the position of the intermediate vertices eliminates both self-intersections, and also produces a shorter curve.}
\end{figure}

Suppose for a contradiction that the self-intersection $z'$ was also created. We will use a mix-and-match argument in order to arrive at a contradiction to the fact that $\eta'''$ is within $\eps$ of being homotopically shortest. The mix-and-match consists of interchanging the positions of $y_{j+\ell'}$ and $y_{k+\ell'}$ for all $1\le \ell'\le \ell$ (see Figure~\ref{fig_created_destroyed}, bottom). But in order for the mix-and-match argument to be valid, we need to show that the portion of $\eta'''$ between $y_j$ and $y_{j+\ell}$ is parametrically disjoint from the portion between $y_{k}$ and $y_{k+\ell}$, or in other words, the indices $j, j+\ell, k, k+\ell$ lie in this circular order. We leave this detail to the end.

Let $L_1, L_2$ be the length of the portions of $\eta'''$ between $y_{j}$ and $y_{j+\ell+1}$, and between $y_{k}$ and $y_{k+\ell+1}$, respectively. Suppose first that for at least one of the pairs $(y_j, y_k)$, $(y_{j+\ell+1}, y_{k+\ell+1})$, the two points are separated by distance larger than $\eps$. In this case, if we interchange the positions of $y_{j+\ell'}$ and $y_{k+\ell'}$ (which, as mentioned before, are separated by distance larger than $\eps$) for all $1\le \ell'\le \ell$, then $L_1+L_2$ decreases significantly (by an amount that does not tend to $0$ with $\eps$). This contradicts the assumption that $\eta'''$ is $\eps$-close to the shortest homotopic curve $\gamma'$.

Now suppose $(y_j, y_k)$ are $\eps$-close to each other, as well as $(y_{j+\ell+1}, y_{k+\ell+1})$. If $L_1$ is significantly different from $L_2$ (by an amount that does not tend to $0$ with $\eps$), say with $L_1>L_2$, then we could have moved each $y_{j+\ell'}$ to within $\eps$ of $y_{k+\ell'}$, lowering $L_1$ to $L_2$ and making $\eta'''$ shorter. And if $L_1$ is almost equal to $L_2$ then, letting $\eps\to 0$ we obtain a counterexample to the uniqueness of the shortest homotopic path from $y_j$ to $y_{j+\ell+1}$. This finishes our mix-and-match argument.

Now we prove that the indices $j, j+\ell, k, k+\ell$ lie in this circular order, as promised. Suppose for a contradiction that $j<k\le j+\ell$. Then the edges $e_{j+1}, \ldots, e_k$ are the same as the edges $e_{k+1}, \ldots, e_{2k-j}$, meaning, the curve $\eta'''$ winds twice in a row along the same edges. But then, by a result of Hershberger and Snoeyink~\cite{hs94}, which we call the \emph{two windings lemma} and include in Appendix~\ref{app_windings}, $y_{j+u}$ must be $\eps$-close to $y_{k+u}$ for some $1\le u\le k-j$. This is a contradiction, as mentioned above.

This concludes the proof of Theorem~\ref{thm_crossings} for the case of the number of self-intersections of a single curve. The case of the number of intersections of two curves is similar, though slightly simpler since there is no need to invoke the two windings lemma.
\end{proof}

\begin{reptheorem}{thm_bound_by_peeling}
For a fixed obstacle set $P$, the $P$-curve that maximizes the number of HCS iterations is the boundary of the convex hull of $P$.
\end{reptheorem}

\begin{proof}
Let $\gamma_0$ be a given $P$-curve, and let $\delta_0$ be the boundary of the convex hull of $P$. The curves $\gamma_0$ and $\delta_0$ are disjoinable, and furthermore, their separation into disjoint $\widehat\gamma_0$, $\widehat\delta_0$ can be done in such a way that $\widehat\gamma_0$ lies on the bounded side of $\widehat\delta_0$. We say for short that $\delta_0$ \emph{bounds} $\gamma_0$. Let $\gamma_{i+1} = \HCS_P(\gamma_i)$ and $\delta_{i+1} = \HCS_P(\delta_i)$ for all $i$. The proof of Theorem~\ref{thm_crossings} describes how to obtain disjoint $\widehat\gamma_{i+1}$, $\widehat\delta_{i+1}$ that are $\eps$-close to $\gamma_{i+1}$, $\delta_{i+1}$, from the corresponding curves $\widehat\gamma_i$, $\widehat\delta_i$. Furthermore, each of the steps described in the proof (namely the snapping, edge-sequence reduction, and vertex sliding steps) conserves the property that $\widehat\gamma_{i}$ remains on the bounded side of $\widehat\delta_i$. Hence, $\delta_i$ bounds $\gamma_i$ for all $i$. Therefore, the HCS process starting with $\gamma_0$ does not take more iterations than the one starting with $\delta_0$.
\end{proof}

\begin{reptheorem}{thm_gp}
Let $P$ be an obstacle set in general position. Let $\gamma$ be a simple $P$-curve. Then $\HCS_P(\gamma)$ is also simple. Let $\gamma_1, \gamma_2$ be disjoint $P$-curves. Then $\HCS_P(\gamma_1), \HCS_P(\gamma_2)$ are also disjoint. Hence, under HCS with obstacles in general position, a simple curve stays simple, and a pair of disjoint curves stay disjoint.
\end{reptheorem}

\begin{proof}
Let $P$ be in general position, let $\mathcal T$ be a triangulation of $P$, and let $\gamma$ be a simple $P$-curve. Suppose that $\gamma'=\HCS_P(\gamma)$ did not collapse to a point. We will show that $\gamma'$ is also simple. Let $\eta$, $\eta'$, $\eta''$, $\eta'''$ be as in the proof of Theorem~\ref{thm_crossings}; see again Table~\ref{table_curves}.

Let $(e_0, \ldots, e_{m-1})$ be the circular list of edges visited by $\eta''$ and $\eta'''$. For each $0\le i<m$, let $y_i$ be the vertex of $\eta''$ on $e_i$, and let $z_i$ be the vertex of $\eta'''$ on $e_i$. Hence, $\eta''$ was transformed into $\eta'''$ by sliding each $y_i$ to $z_i$ along $e_i$.

We already know that $\eta'''$ is simple, and so no two vertices $y_i$ on the same edge switched order during the sliding. Hence, all we need to show is that $\gamma'$ does not visit any vertex twice.

\begin{figure}
\centerline{\includegraphics{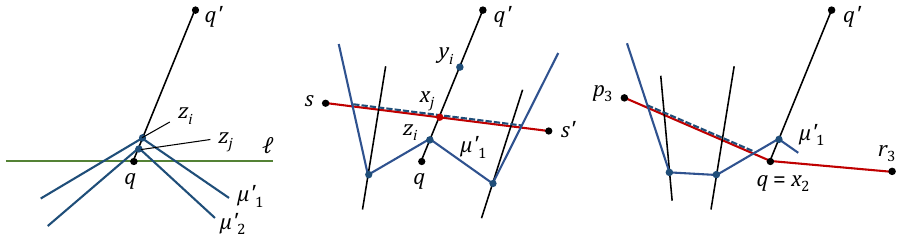}}
\caption{\label{fig_gp_tr}Left: Assuming for a contradiction that $\gamma'$ visits an obstacle $q$ twice. Center: The red segment $ss'$ is part of $\gamma$, and $\eta'''$ (blue) could be made shorter by taking the dashed shortcut. Right: The red segments $p_3qr_3$ are part of $\gamma$, and again $\eta'''$ (blue) could be made shorter by taking the dashed shortcut.}
\end{figure}

Suppose for a contradiction that $q\in P$ is visited twice by $\gamma'$, with the two visits being $\mu_1=p_1 q r_1$ and $\mu_2=p_2 q r_2$. Let $\mu'_1, \mu'_2$ be the portions of $\eta'''$ corresponding to $\mu_1, \mu_2$. Since $\eta'''$ is simple and $\eps$-close to shortest, there must exist a line $\ell$ through $q$ such that $p_1,r_1,p_2,r_2$ all lie on one side of $\ell$, and such that $\mu'_1$, $\mu'_2$ briefly cross to the other side of $\ell$ when going around $p$, before crossing back. See Figure~\ref{fig_gp_tr} (left). Assume for simplicity that $\ell$ is horizontal, with $p_1,r_1,p_2,r_2$ below $\ell$. Let $e=qq'$ be a triangulation edge with $q'$ above $\ell$. Let $z_i,z_j$ be the vertices of $\eta'''$ in $\mu'_1\cap e$, $\mu'_2\cap e$, respectively, where $e_i=e_j=e$. Assume without loss of generality that $z_i$ lies higher than $z_j$.

The corresponding vertices $y_i,y_j$ of $\eta''$ are $\eps$-close to two points $x_i,x_j\in \gamma\cap e$, where $x_i$ lies higher than $x_j$. The point $x_j$ could be either the lower endpoint $q$ or somewhere in the middle of $e$ (it cannot be the upper endpoint $q'$ since then we would also need to have $x_i=q$, so $\gamma$ would self-intersect).

Suppose first that $x_j$ is somewhere in the middle of $e$, so $x_j\in ss'$ for some $s,s'\in P$. The segment $ss'$ cannot equal $e$, since then $\gamma$ would self-intersect as before. Hence, $y_i$ crossed $ss'$ on its way to $z_i$. Let $i'\le i$ and $i''\ge i$ be minimal and maximal indices such that all the edges $e_{i'}, \ldots, e_{i''}$ intersect $ss'$ and the corresponding vertices $y_{i'},\ldots,y_{i''}$ crossed $ss'$ on their way to $z_{i'},\ldots,z_{i''}$, in the transition from $\eta''$ to $\eta'''$. (The edges $e_{i'}, \ldots, e_{i''}$ cannot be \emph{all} the edges $e_0, \ldots, e_{m-1}$, since then $\eta''$ would be collapsible to a point.)
But then $\eta'''$ could be made significantly shorter (by an amount that does not tend to $0$ as $\eps\to 0$) by not having these vertices cross $ss'$. See Figure~\ref{fig_gp_tr} (center). Contradiction.

Now suppose $x_j=q$, so $x_j$ is part of a visit $p_3 q r_3$ of $\gamma$. The edge $e$ must be on the side of at which the angle $p_3qr_3$ is less than $180$ degrees. Hence, one of the points $p_3, r_3$ must be above $\ell$. Say it is $p_3$. Hence, the portion $\mu'_1$ of $\eta'''$ intersects $p_3q$, so so one or more consecutive vertices of $\eta''$ moved across $p_3q$ in the transition from $\eta''$ to $\eta'''$. Thus, $\eta'''$ could be made shorter by not crossing $p_3q$, as before. Contradiction.

The case of two curves is similar.
\end{proof}

\subsection{Proof of Theorem~\ref{thm_infl}}

In this section we prove Theorem~\ref{thm_infl}, regarding the number of inflection edges. The proof is based of the vertex-release algorithm. The basic idea is that, given a self-disjoinable curve, if the vertex releases are performed in an appropriate order, then the curve stays self-disjoinable at all times. Moreover, no vertex release increases the number of inflection edges. Along the way, we develop enough machinery to re-prove Theorem~\ref{thm_crossings}.

We will use the type-2 curves introduced above. We also introduce two other types of curves that are arbitrarily close to $P$-curves. We call them \emph{type-1} and \emph{type-3} curves.

\begin{figure}
\centerline{\includegraphics{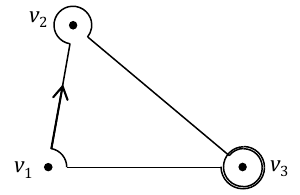}}
\caption{\label{fig_type1}Type-1 curve with $\alpha_1 = 80^\circ, \alpha_2=-300^\circ, \alpha_3=-680^\circ$.}
\end{figure}

\paragraph{Type-1 curves.}
Let $P$ be a set of obstacle points, let $\gamma$ be a $P$-curve that goes through obstacles $(p_0, \ldots, p_{k-1})$ in this circular order, and let $\eps>0$ be small enough. For each $p\in P$, let $C_p$ be a circle of radius $\eps$ centered at $p$. A \emph{type-1} curve $\delta$ corresponding to $\gamma$ is composed of \emph{straight parts} and \emph{circular parts}. For each $i$, let $e_i$ be the segment $p_{i-1}p_i$. Each straight part goes from the point $y_{i-1}=e_i\cap C_{p_{i-1}}$ to the point $x_i=e_i\cap C_{p_i}$. And each circular part goes along $C_{p_i}$ from $x_i$ to the next point $y_i$, either clockwise or counterclockwise, describing any number of turns around $C_i$.\footnote{The circular parts can be approximated arbitrarily closely by piecewise-linear paths.} Hence, $\delta$ can be combinatorially specified by associating to each $p_i$ a signed angle $\alpha_i$ that is congruent modulo $2\pi$ to $\angle p_{i-1}p_ip_{i+1}$. See Figure~\ref{fig_type1}. The unique $P$-curve corresponding to a given type-1 curve $\delta$ is denoted by $\tP(\delta)$.

\paragraph{Type-3 curves.} A \emph{type-3} curve is an $(\eps^2)$-perturbation of a type-1 curve. A type-3 curve is a self-transversal curve composed of \emph{straight parts} and \emph{spiral parts}. Each spiral part is centered at some $p\in P$, and its initial and final radii satisfy $\eps<r_0<r_1<\eps+\eps^2$. The interval $[r_0, r_1]$ is called the \emph{radial interval} of the spiral part. Different spiral parts centered at the same obstacle $p$ have disjoint radial intervals. Furthermore, the endpoints of the spiral parts are displaced either clockwise or counterclowise by distance up to $\eps^2$ from the corresponding endpoints in the type-1 curve. It does not matter whether the spiral parts spiral out clockwise or counterclockwise, since the number of self-intersections is unaffected.

Hence, a type-3 curve can be specified purely combinatorially, by specifying, for each obstacle $p\in P$, the relative order of the radial intervals of the spiral parts around $p$, and for each other obstacle $q\neq p$, the clockwise order of the endpoints near $p$ of the straight parts that go between $p$ and $q$. See Figure~\ref{fig_type3} for an example.

\begin{figure}
\centerline{\includegraphics{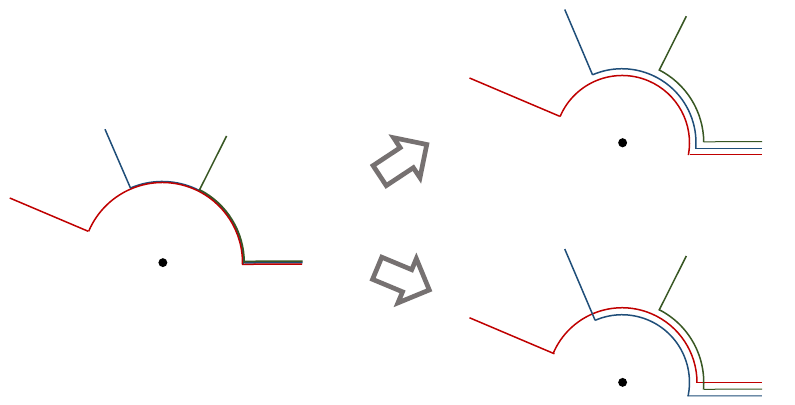}}
\caption{\label{fig_type3}Two different type-3 realizations of the same type-1 curve. The type-1 curve in this example visits the shown obstacle three times. The three visits are shown in different colors.}
\end{figure}

More generally, we define a \emph{collection of type-3 curves} on the same obstacle set, and we specify their relation to one another purely combinatorially in a similar way.

Each type-3 curve $\zeta$ corresponds (is $(\eps^2)$-close) to a unique type-1 curve, which we denote by $\tOne(\zeta)$. We also define $\tP(\zeta)$ as $\tP(\tOne(\zeta))$.

\begin{observation}\label{obs_2to3}
Let $\delta$ be a self-transversal type-2 curve. Then $\delta$ can be turned into a self-transversal type-3 curve $\zeta$ $\eps$-close to it, such that their number of self-intersections satisfy $\cros(\zeta)= \cros(\delta)$. Similarly, let $\delta_1, \delta_2$ be transversal type-2 curves. Then they can be turned into transversal type-3 curves $\zeta_1, \zeta_2$ $\eps$-close to them whose number of intersections satisfy $\cros(\zeta_1,\zeta_2)=\cros(\delta_1,\delta_2)$.
\end{observation}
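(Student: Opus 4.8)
The plan is to build $\zeta$ so that it agrees with $\delta$ outside tiny disks about the obstacles, differing from $\delta$ only inside those disks, and then to check that the replacement inside each disk leaves the crossing number unchanged. For $p\in P$ let $D_p$ be the radius-$\eps$ disk used in the definition of type-2 curves and let $D'_p$ be the concentric disk of radius $\eps+\eps^2$. For $\eps$ small the $D'_p$ are pairwise disjoint and --- assuming, as we may after an arbitrarily small initial wiggle and a generic choice of $\eps$, that no edge of $\tP(\delta)$ passes through an obstacle except at its endpoints --- the only part of $\delta$ inside $D'_p$ consists of the $k=k(p)$ chords $x^{(1)}y^{(1)},\dots,x^{(k)}y^{(k)}$ at the visits of $\delta$ to $p$, together with the almost-radial end segments (``spokes'') of the $2k$ long parts incident to $p$, which reach from $\partial D'_p$ in to the feet $x^{(j)},y^{(j)}\in\partial D_p$. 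Hence every self-crossing of $\delta$ is a crossing of two chords inside some $D_p$, or a crossing of two long parts lying outside every $D'_p$: spokes live in the annulus $D'_p\setminus D_p$ and so cannot meet chords (which lie in $D_p$), and two spokes at the same obstacle are almost-radial with distinct feet, hence disjoint. Since $\zeta$ will coincide with $\delta$ outside $\bigcup_p D'_p$ up to an $O(\eps^2)$ move of the long parts --- which for $\eps$ small preserves long-part/long-part crossings and keeps them outside every $D'_p$ --- it suffices to show that inside each $D'_p$ the portion of $\zeta$ has exactly as many self-crossings as that of $\delta$.

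Fix $p$. Because the vertices of $\tP(\delta)$ at $p$ are non-nailed, $p$ lies strictly on the reflex side of each chord $x^{(j)}y^{(j)}$, so $\arg(\,\cdot-p\,)$ runs monotonically along it and sweeps the short circular arc $J_j$ (of length $<\pi$) from $x^{(j)}$ to $y^{(j)}$. Two chords cross --- exactly once --- iff their four feet interleave on $\partial D_p$, so the number of self-crossings of $\delta$ inside $D'_p$ is the number of visit-pairs $\{j,j'\}$ with interleaving feet; incidentally this is the least possible, since interleaving feet force a crossing of any two arcs with no extra winding around $p$. For $\zeta$, inside $D'_p$ we build, for each visit $j$, an almost-radial spoke near angle $\arg(x^{(j)}-p)$, a spiral part going the short way over the arc $J_j$ with radii in a band $B_j\subset(\eps,\eps+\eps^2)$, and an almost-radial spoke near angle $\arg(y^{(j)}-p)$, choosing the bands $B_1,\dots,B_k$ pairwise disjoint. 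The one point that needs care is the radial order of the bands: we take it to be a linear extension of the rule ``$J_j\subsetneq J_{j'}$ implies $B_j$ lies above $B_{j'}$'' (such an order exists since proper inclusion of arcs is a strict partial order). This is a valid type-3 realization $\eps$-close to $\delta$ near $p$.

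With this choice the spirals are pairwise disjoint (disjoint radial bands) and the spokes are pairwise disjoint, so every crossing inside $D'_p$ is a spoke of one visit against the spiral of another. A spoke of visit $j$ at angle $\theta$ meets the spiral of visit $j'$ iff $B_{j'}$ lies above $B_j$ and $\theta\in J_{j'}$, and then exactly once; since the two spoke angles of $j$ are the endpoints of $J_j$, the number of crossings between $j$ and $j'$ equals the number of the two feet of the radially deeper visit that lie in the arc swept by the shallower one. Going through the cases --- $J_j,J_{j'}$ disjoint, properly overlapping, or nested --- this number is $1$ exactly when the feet interleave and $0$ otherwise: disjoint arcs contribute $0$ and do not interleave; properly overlapping arcs contribute $1$ and do interleave; and for $J_j\subsetneq J_{j'}$ the radial rule makes $j$ the shallower visit, so the count is the number of $j'$'s feet in the smaller arc $J_j$, namely $0$, and the feet do not interleave. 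Hence the number of self-crossings of $\zeta$ inside $D'_p$ also equals the number of interleaving visit-pairs, matching $\delta$; summing over $p$ and adding the common long-part crossings gives $\cros(\zeta)=\cros(\delta)$, and $\zeta$ is self-transversal after a generic $O(\eps^2)$ jiggle of the spokes and the spiral endpoints. The two-curve statement is identical: inside each $D'_p$ list the chords (resp.\ the spiral-with-spokes components) of $\delta_1$ and $\delta_2$ together, order the combined family of bands by the same reverse-inclusion rule, and the same case analysis shows a visit of $\delta_1$ crosses a visit of $\delta_2$ once iff their feet interleave, giving $\cros(\zeta_1,\zeta_2)=\cros(\delta_1,\delta_2)$.

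The only genuinely non-routine step is the case analysis on pairs of circular arcs in the third paragraph --- verifying that the spiral realization with the reverse-inclusion radial order attains exactly the chord realization's crossing count inside each disk. The reduction to a single disk and the claims about the long parts are routine bookkeeping, valid once $\eps$ is small enough.
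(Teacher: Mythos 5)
Your proposal is correct and follows essentially the same approach as the paper: replace each short part by a spiral whose radial band is assigned so that a shorter chord (equivalently, a smaller subtended arc, which is exactly the nested case you isolate as the one that matters) is placed at larger radius, and then verify per disk that the spiral/spoke crossing count equals the chord crossing count. The paper states the ordering rule as ``the shorter the part, the larger the radius'' and refers to a figure; your writeup supplies the case analysis (disjoint, properly overlapping, nested arcs) that the paper leaves implicit.
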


\begin{proof}
Assume without loss of generality that the given type-2 curves do not pass through any $p\in P$. We turn the short parts of the type-2 curves into spiral parts of type-3 curves. The shorter the part, the larger the radius. See Figure~\ref{fig_2to3}. The number of intersections stays unchanged.
\end{proof}

\begin{figure}
\centerline{\includegraphics{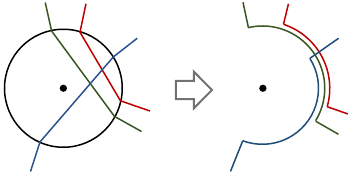}}
\caption{\label{fig_2to3}Turning a type-2 curve into a type-3 curve.}
\end{figure}

Let $\gamma$ be a $P$-curve, let $\delta$ be a type-2 curve corresponding to $\gamma$, and let $\zeta$ be the type-3 curve obtained from $\delta$ as in Observation~\ref{obs_2to3}. For every nailed visit that $\gamma$ makes to an obstacle $p\in P$, the corresponding type-3 curve $\zeta$ has a corresponding visit to $p$ with a spiral part that describes an angle very close to $\pi$. We call this visit of $\zeta$ \emph{nailed}. Call a type-3 curve \emph{steady} if, for every $p\in P$, the nailed spiral parts around $P$ have smaller radii than the non-nailed ones. Then our curve $\zeta$ is steady by construction.

A visit of a type-3 curve to an obstacle $p\in P$ is called \emph{stable} (resp.~\emph{unstable}) if the visit of the corresponding type-1 curve to $p$ is stable (resp.~unstable).

In order to release an unstable visit to an obstacle $p$ in a type-3 curve, we proceed as in the type-1 curve it realizes, and then we decide on the radius and endpoint order of the spiral parts that were newly created or modified. If $p$ is both preceded and followed by the same obstacle $q$, recall that we merge the two visits of $q$. If at least one of these visits was nailed, then we mark the new visit as nailed.

\begin{lemma}\label{lemma_release_type3}
Let $\gamma$ be a steady type-3 curve with at least one unstable, non-nailed obstacle visit. Then there is a way to perform a vertex release on one such visit, such that the number of self-intersections does not increase, and such that the resulting curve is also steady.
\end{lemma}

\begin{proof}
Let $p$ be an obstacle that has at least one unstable non-nailed visit. Release, from among all unstable visits to $p$, the one $v_i$ with the largest radius in $\gamma$. Suppose first that $v_{i-1}$ and $v_{i+1}$ visit different obstacles. In each new obstacle visit $w_k$, we give the new spiral part the largest radius around in that obstacle. The relative order of the starting and ending points of that spiral part are chosen so that the spiral part is as short as possible. The spiral part of $v_{i-1}$ has an endpoint $a$ that stays in place and an endpoint $b$ that moves. The new position of the endpoint $b$ is chosen as close as possible to its old position. The same is done for $v_{i+1}$.

Hence, a piece $\gamma_{\mathrm{old}}$ of $\gamma$ is replaced by another piece $\gamma_{\mathrm{new}}$, and $\gamma_{\mathrm{old}}, \gamma_{\mathrm{new}}$ together form a simple curve, bounding a region $R$. See Figure~\ref{fig_release_type3} (left). The number of self-intersections does not increase, since any other curve piece that enters $R$ by crossing $\gamma_{\mathrm{new}}$, must exit $R$ by crossing $\gamma_{\mathrm{old}}$. 

\begin{figure}
\centerline{\includegraphics{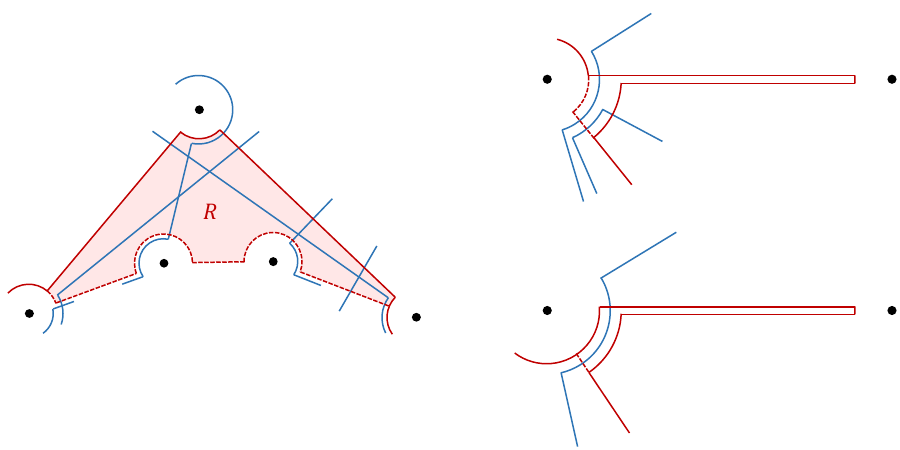}}
\caption{\label{fig_release_type3}Vertex releases on type-3 curves. In each example, the red curve undergoes a vertex release. The dotted curves are the resulting curves after the release. The blue curves are examples of other curves (of portions thereof) present in the vicinity. As a case analysis shows, whenever a blue curve intersects the red curve after the release, the two curves also intersected before the release.}
\end{figure}

If $v_{i-1}$ and $v_{i+1}$ visit the same obstacle $q$, then there are two spiral parts $c_1, c_2$ around $q$ that are merged into one. We give the new spiral part the smaller radius among the old radii of $c_1$, $c_2$. This way, the resulting curve is also steady. Further, a case analysis shows that the number of self-intersections does not increase. See Figure~\ref{fig_release_type3} (right).
\end{proof}

Lemma~\ref{lemma_release_type3} provides an alternative proof of Theorem~\ref{thm_crossings}: We repeatedly release non-nailed unstable vertices according to the lemma, until no such vertices remain.

\begin{lemma}\label{lemma_turn}
Let $C$ be a circle, let $p,r$ be points on $C$, and let $q$ be a point in the interior of the disk bounded by $C$. Suppose the ordered triple $p,q,r$ describes a clockwise (resp.~counterclockwise) turn. Let $\gamma$ be a simple piecewise-linear path that goes from $p$ through $q$ to $r$ and does not otherwise intersect $C$. Then $\gamma$ has at least one clockwise (resp.~counterclockwise) vertex. See Figure~\ref{fig_turn} (left).
\end{lemma}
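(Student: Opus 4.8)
I would argue, for the clockwise case (the counterclockwise case following by reflecting in a line), by contradiction: suppose every vertex of $\gamma$ is a counterclockwise vertex, and derive a contradiction from the fact that $\gamma$ passes through a point $q$ with $p,q,r$ clockwise. First I would record that, expanding a $2\times 2$ determinant, $p,q,r$ is a clockwise turn exactly when $q$ lies strictly to the left of the line $\ell$ through $p$ and $r$, directed from $p$ to $r$; write $H$ for that open left half-plane. Since $\gamma$ is simple, avoids $C$ except at $p$ and $r$, and passes through $q\in\operatorname{int}D$, the connected set $\gamma\smallsetminus\{p,r\}$ (being a connected subset of $\mathbb{R}^2\smallsetminus C$ that meets $\operatorname{int}D$) lies in $\operatorname{int}D$; hence $\gamma\subseteq\bar D$, and because $\ell\cap\bar D$ is exactly the chord $\overline{pr}$, every point of $\gamma$ lying on $\ell$ lies on $\overline{pr}$. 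Now, as $\gamma$ passes through $q\in H$, there is a maximal sub-arc $\sigma$ of $\gamma$ through $q$ whose relative interior lies in $H$; its endpoints $a,b$ then lie on $\overline{pr}$, it is contained in the cap $\bar D\cap\bar H$, it meets $\ell$ only at $a$ and $b$, and every one of its vertices is counterclockwise. It therefore suffices to find a clockwise vertex on $\sigma$.

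The workhorse is Hopf's Umlaufsatz for \emph{simple} piecewise-linear closed curves: the signed sum of its exterior angles is $+2\pi$ if it is positively oriented and $-2\pi$ if negatively oriented; recall also that a straight segment contributes $0$ and that each individual corner angle lies strictly between $-\pi$ and $\pi$. Suppose first that $a$ precedes $b$ along $\overrightarrow{pr}$. Then $\sigma$ (traversed $a\to b$) followed by the chord segment $\overline{ba}$ (traversed $b\to a$) is a simple closed curve $\Gamma$ — simple because the relative interior of $\sigma$ lies in the open half-plane $H$ while $\overline{ba}\subseteq\ell$, so they meet only at $a$ and $b$ — bounding a region contained in $\bar H$. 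Tracing orientations (walking $b\to a$ along $\overline{ba}$ is opposite to $\overrightarrow{pr}$, and the bounded region is then on one's right), $\Gamma$ is negatively oriented, so its total turning is $-2\pi$. Subtracting the $0$ contributed by the chord and the two corner angles at $a$ and $b$, each of which exceeds $-\pi$, forces the total turning of $\sigma$ — the sum of the signed turn angles at its vertices — to be strictly negative; hence $\sigma$, and therefore $\gamma$, has a clockwise vertex, contradicting our assumption.

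The hard part will be the remaining case, in which $b$ precedes $a$ along $\overrightarrow{pr}$, so that the excursion $\sigma$ runs ``backward'': there the same computation makes $\Gamma$ positively oriented and leaves the sign of the total turning of $\sigma$ unconstrained, and no purely local argument can finish, since there do exist combinatorially valid counterclockwise-only simple paths that begin and end on a line, stay on one side of it, and run backward — so the circularity of $C$, not merely the local convexity of $\gamma$, must enter here. My plan for this case is a planarity argument. Since by the previous paragraph any \emph{forward} excursion of $\gamma$ into $H$ would produce a clockwise vertex, under our assumption \emph{every} excursion of $\gamma$ into $H$ runs backward, and $\gamma$ decomposes as an alternating concatenation of sub-arcs lying in the ``good'' cap $\bar D\smallsetminus H$ (the maximal stretches of $\gamma$ outside $H$) and sub-arcs lying in $\bar D\cap\bar H$ (the excursions). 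The good-cap sub-arcs are pairwise disjoint simple arcs whose endpoints all lie on the flat side $\overline{pr}$ of that cap; I expect to show that the ``every excursion backward'' condition forces two of these arcs to have interleaving endpoints along $\overline{pr}$, which is impossible inside a topological disk by the Jordan curve theorem. This contradiction would show $\gamma$ cannot enter $H$ at all, let alone pass through $q$, completing the proof; alternatively, this step could be organized as an induction on the number of crossings of $\gamma$ with $\ell$. The one real obstacle is to carry out the interleaving argument cleanly, in particular accounting for excursions that re-enter $H$ and for the degenerate possibilities $a=p$ or $b=r$; everything else is Umlaufsatz bookkeeping.
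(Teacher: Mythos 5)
Your approach is genuinely different from the paper's, but it is not complete. You split on whether the excursion $\sigma$ of $\gamma$ into the half-plane $H$ containing $q$ runs forward or backward along $\overrightarrow{pr}$; the forward case is handled cleanly by the Umlaufsatz, but the backward case is left as a sketch, and you say so yourself. The interleaving idea does work in the simplest configuration (one backward excursion: the arcs of $\gamma$ from $p$ to $a$ and from $b$ to $r$ lie in the opposite cap $\bar D\setminus H$ with endpoints $p,b,a,r$ interleaved on the chord, so the Jordan curve theorem forces a crossing, contradicting simplicity), but with several excursions, with excursions that re-enter $H$, and with pieces of $\gamma$ possibly lying along $\ell$ itself, the bookkeeping is not routine and you have not discharged it. Until that is done the lemma is not proved.

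For comparison, the paper avoids the case split entirely by working with the convex hull $K$ of $\gamma$ rather than with the single chord line $\ell$. Listing the vertices of $K$ counterclockwise as $v_0=p,\ldots,v_m=r,\ldots,v_{n-1}$, the clockwise position of $q$ relative to $\overrightarrow{pr}$ forces $K$ to have a vertex $v_j$ with $m<j<n$; taking the first such $v_j$ that $\gamma$ visits, and the last $v_i$ with $0\le i<m$ visited before it, the subpath of $\gamma$ from $v_i$ to $v_j$ together with the counterclockwise arc of $K$ from $v_i$ to $v_j$ forms a simple closed curve that $\gamma$ cannot cross, so to reach $r=v_m$ the path must turn clockwise at $v_j$. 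That argument is shorter and leaves no residual case. Your route, if you can carry out the planarity step in full generality, would have its own appeal --- it is more local, resting only on the Umlaufsatz plus an explicit Jordan-curve interleaving rather than on a global convex hull --- but as written there is a real gap.
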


\begin{figure}
\centerline{\includegraphics{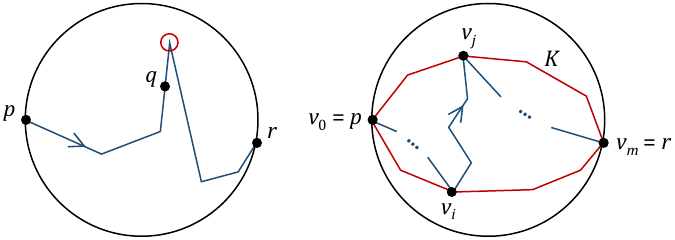}}
\caption{\label{fig_turn}Left: A simple path from $p$ through $q$ to $r$ that does not leave the circle must contain at least one clockwise vertex (marked in red). Right: Proof of the claim: The path must turn clockwise at $v_j$.}
\end{figure}

\begin{proof}
Let $K$ be the boundary of the convex hull of $\gamma$. Let $v_0, \ldots, v_{n-1}$ be the vertices of $K$ in counterclockwise order, with $v_0=p$. Let $m$ be such that $v_m=r$. Let $v_j$ be the first vertex with $m<j<n$ visited by $\gamma$, and let $v_i$ be the last vertex with $0\le i<m$ visited by $\gamma$ before $v_j$. Then the subpath of $\gamma$ from $v_i$ to $v_j$, together with the part of $K$ counterclockwise from $v_i$ to $v_j$, form a simple closed curve, which $\gamma$ cannot cross. Since $\gamma$ must end at $r=v_m$, it must turn clockwise at $v_j$. See Figure~\ref{fig_turn} (right).
\end{proof}

Let $\gamma$ be a self-disjoinable $P$-curve, and let $\delta$ be a simple type-3 curve realizing $\gamma$. (We know such a $\delta$ exists by Lemma~\ref{lemma_snap} and Observation~\ref{obs_2to3}.) Let $(p_0, \ldots, p_k)$ be the minimal sequence of vertices of $\gamma$ (i.e.~this sequence omits obstacles at which $\gamma$ continues in a straight line). For each $i$, let $e_i$ be the edge $e_i=p_{i-1}p_i$. We will define the notion of an edge $e_i$ \emph{being realized by $\delta$ as an inflection edge}. For this, we first define the notion of a vertex $p_i$ \emph{being realized by $\delta$ as a clockwise (counterclockwise) turn}. If $p_{i+1}$ lies to the right (resp.~left) of the directed edge $e_i$, then we say that $p_i$ is realized by $\delta$ as a clockwise (resp.~counterclockwise) turn, irrespective of $\delta$. Now suppose $p_{i+1}$ lies on the ray emanating from $p_i$ through $p_{i-1}$ (so $\gamma$ makes a U-turn at $p_i$). Let $q_i$ be the obstacle point visited by $\gamma$ right before and after $p_i$ (note that $q_i$ is not necessarily a vertex of $\gamma$). Let $L_i$, $L'_i$ be the long parts of $\delta$ corresponding to the segments $q_ip_i$, $p_iq_i$ of $\gamma$, respectively. If $L'_i$ lies to the right of $L_i$ (considering $L_i$ as a directed segment), then we say that $p_i$ is realized by $\delta$ as a clockwise turn. Otherwise, we say that $p_i$ is realized by $\delta$ as a counterclockwise turn. The turning direction of the spiral parts of $\delta$ are irrelevant for this definition. Finally, we say that an edge $e_i$ is realized by $\delta$ as an inflection edge if one of $p_{i-1}, p_i$ is realized by $\delta$ as a clockwise turn and the other one as a counterclockwise turn. Denote by $\infl_\gamma(\delta)$ the number of edges of $\gamma$ that are realized by $\delta$ as inflection edges.

For every obstacle $q$ at which $\gamma$ continues in a straight line, we say that $\delta$ is \emph{straight} at $q$.

\begin{reptheorem}{thm_infl}
Let $\gamma$ be self-disjoinable, and let $\gamma'=\HCS_P(\gamma)$. Then their inflection-edge numbers satisfy $\infl(\gamma')\le \infl(\gamma)$. Hence, under HCS on a self-disjoinable curve, the curve's number of inflection edges never increases.
\end{reptheorem}

\begin{proof}
Let $\gamma$ be a self-disjoinable $P$-curve, and let $\gamma'=\HCS_P(\gamma)$. Let $\eps>0$ be small enough. Let $\widehat\gamma$ be a simple piecewise-linear curve $(\eps^2)$-close to $\gamma$ minimizing the number of inflection edges $\infl(\widehat\gamma)$. We will construct a simple piecewise-linear curve $\rho$ that is $\eps$-close to $\gamma'$ and satisfies $\infl(\widehat\gamma)\ge\infl(\rho)$.

Let $\delta=\snap(\widehat\gamma)$. By Lemma~\ref{lemma_snap}, $\delta$ is also simple. Let $\zeta$ be the type-3 curve obtained from $\delta$ according to the procedure in Observation~\ref{obs_2to3}. Then $\zeta$ is simple as well. Consider a vertex $p_i$ of $\gamma$ that is realized by $\zeta$ as a clockwise turn. By the construction of $\delta$ and $\zeta$ and by Lemma~\ref{lemma_turn}, $\widehat\gamma$ must have a clockwise vertex inside the circle $C_{p_i}$. Similarly, if $p_i$ is realized by $\zeta$ as a counterclockwise turn, then $\widehat\gamma$ must have a counterclockwise vertex inside $C_{p_i}$. Hence, $\infl(\widehat\gamma)\ge\infl_\gamma(\zeta)$.

Now repeatedly release non-nailed unstable obstacle visits from $\zeta$ as in Lemma~\ref{lemma_release_type3}, until none remain, obtaining $\zeta'$. Let $\zeta_0, \ldots, \zeta_k$ be the sequence of type-3 curves produced in this process, with $\zeta_0=\zeta$ and $\zeta_k=\zeta'$. Then $\zeta_i$ is simple for each $i$. For each $i$, let $\gamma_i = \tP(\zeta_i)$. In particular, $\gamma'=\gamma_k$.

We claim that no vertex release increases the number of inflection edges, meaning $\infl_{\gamma_i}(\zeta_i)\ge\infl_{\gamma_{i+1}}(\zeta_{i+1})$ for all $i$.

Indeed, consider the release of vertex $p_j$ of $\gamma_i$. Say it is realized by $\zeta_i$ as a clockwise turn. Suppose first that $\gamma_i$ does not make a U-turn at $p_j$. Then $p_j$ is replaced in $\gamma_{i+1}$ by zero or more new vertices, all of which are realized by $\zeta_{i+1}$ as clockwise turns. Let $q, q'$ be the obstacles crossed by $\gamma_i$ just before and after $p_j$, respectively. Then each of $q,q'$ could change from counterclockwise to straight, or from counterclockwise to clockwise, or from straight to clockwise, or they could stay as they are. This is true even if one of $\gamma_i$, $\gamma_{i+1}$ makes a U-turn at one or both of these vertices. Hence, the number of alternations between clockwise and counterclockwise did not increase at step~$i$.

\begin{figure}
\centerline{\includegraphics{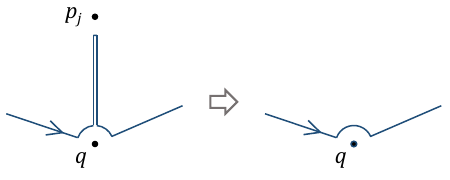}}
\caption{\label{fig_release_cw}When releasing a vertex $p_j$ that makes a clockwise U-turn, if after the release the curve makes a counterclockwise turn at $q$, then before the release both visits to $q$ were counterclockwise.}
\end{figure}

Now suppose that $\gamma_i$ makes a U-turn at $p_j$. Let $q$ be the obstacle that precedes and follows $p_j$ in $\gamma$. Recall that in this case, the visit to $p_j$ is removed and the two visits to $q$ are merged into one. As before, here there are several options as to whether the visit(s) of $q$ by $\gamma_i$, $\gamma_{i+1}$ are straight, or are realized as counterclockwise or clockwise by $\zeta_i, \zeta_{i+1}$. However, if the visit of $q$ by $\gamma_{i+1}$ is realized as counterclockwise, then both visits by $\gamma_i$ had to be realized as counterclockwise. See Figure~\ref{fig_release_cw}. Hence, in this case as well, the number of alternations between clockwise and counterclockwise did not increase at step $i$. This finishes the proof that $\infl_{\gamma_i}(\zeta_i)\ge\infl_{\gamma_{i+1}}(\zeta_{i+1})$ for each $i$. Hence, $\infl_\gamma(\zeta)\ge\infl_{\gamma'}(\zeta')$.

\begin{figure}
\centerline{\includegraphics{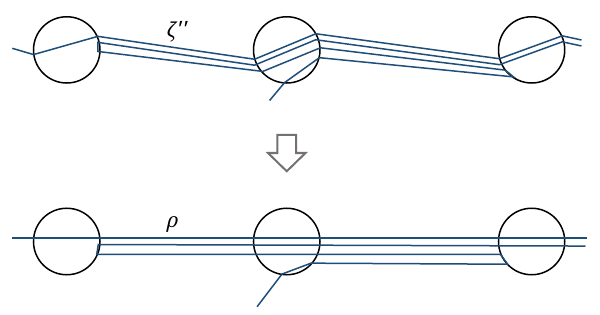}}
\caption{\label{fig_straighten}Removing slight inflection edges.}
\end{figure}

Next, we turn $\zeta'$ into a type-2 curve $\zeta''$ by reversing the procedure of Observation~\ref{obs_2to3}; namely, we turn the spiral parts of $\zeta'$ into straight segments, and we move the endpoints of these segments to distance $\eps$ of the corresponding obstacles. This does not introduce any self-intersections, so $\zeta''$ is simple. There is a slight final problem: For each obstacle at which $\gamma'$ goes straight, the corresponding short part of $\zeta''$ might be an inflection edge. This problem is solved by straightening the appropriate portions of $\zeta''$ as in Figure~\ref{fig_straighten}. Let $\rho$ be the straightened curve. Then $\rho$ is still simple, and no short part of $\rho$ is an inflection edge. Hence, $\infl_{\gamma'}(\zeta')\ge\infl(\rho)$. Summing up, we have
\begin{equation*}
\infl(\widehat\gamma)\ge\infl_\gamma(\zeta)\ge\infl_{\gamma'}(\zeta')\ge\infl(\rho),
\end{equation*}
concluding the proof.
\end{proof}

\section{Discussion}

One of the reasons continuous curve-shortening flows were introduced and studied was to overcome the shortcomings of the \emph{Birkhoff curve-shortening process} (\cite{birkhoff}, see also e.g.~\cite{croke}), specifically the fact that it might cause the number of curve intersections to increase~\cite{grayson_embedded,hass_scott}. As we have shown, HCS is a discrete process that overcomes this flaw without introducing analytical difficulties, at least in the plane. We conjecture that the definition of HCS can be extended without modifications to more general surfaces. It would be interesting to check whether such an extension has the properties which HCS has in the plane. The answer might depend on the properties of the surface.

\paragraph{Acknowledgements.} Thanks to Arseniy Akopyan, Imre B\'ar\'any, Jeff Erickson, Radoslav Fulek, Jeremy Schiff, Arkadiy Skopenkov, and Peter Synak for useful discussions. Thanks also to the referees for their useful comments.

\bibliographystyle{plainurl}
\bibliography{homotopic_arxiv}

\appendix

\section{Uniqueness of shortest homotopic curve}\label{app_unique}

The following lemma is included for completeness.

\begin{lemma}\label{lemma_unique_shortest}
Let $P$ be a set of polygonal and point obstacles, and let $\gamma$ be a path or curve that avoids $P$ (except possibly at the endpoints). Then there exists a unique path or curve $\delta$ homotopic to $\gamma$ that is locally shortest.
\end{lemma}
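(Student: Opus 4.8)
\textbf{Proof proposal for Lemma~\ref{lemma_unique_shortest}.}

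The plan is to split the statement into its two halves — existence and uniqueness — and handle them with independent, essentially elementary, arguments. For existence, I would first reduce to a finitely-supported combinatorial description of the homotopy class. Fix a triangulation $\mathcal T$ of the ambient region (refined so that every obstacle is a vertex or a union of triangulation cells, and so that point obstacles are vertices), and record the combinatorial edge-crossing sequence $\mathcal E(\gamma)$ as in Section~\ref{subsec_triang}. The homotopy class is determined by the reduced edge sequence $\mathcal E'$, which is finite. A locally shortest representative, if it exists, must be polygonal with vertices confined to the (closed) triangulation edges of $\mathcal E'$ together with the reflex obstacle vertices, so the candidate minimizers form a compact finite-dimensional set (a product of closed intervals, one per entry of $\mathcal E'$, together with finitely many discrete choices of which obstacle corners are touched). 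Length is continuous on this compact set, hence attains a minimum; the minimizer is homotopic to $\gamma$ and is locally shortest because any sufficiently small perturbation either leaves this class of candidates or can only increase length. This gives existence.

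For uniqueness, the key tool is the standard convexity argument for geodesics in a nonpositively-curved (here, flat with convex complementary regions — i.e.\ locally CAT$(0)$, though I would phrase it concretely) space. Suppose $\delta_1, \delta_2$ are two locally shortest curves (or paths) in the same homotopy class. First I would reduce the curve case to the path case by cutting at a common point — two freely homotopic locally geodesic loops in a space with this geometry are either equal or freely homotopic through a ``flat annulus,'' and since our region contains no flat cylinders (the obstacles force any annulus to be degenerate), equality follows once the path version is known. For the path version, parametrize $\delta_1, \delta_2$ proportionally to arclength on $[0,1]$ and define the midpoint curve $m(t)$ by taking, for each $t$, the midpoint of the (unique) geodesic segment in the universal cover joining $\tilde\delta_1(t)$ and $\tilde\delta_2(t)$, then projecting down. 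Because the universal cover of the plane-minus-obstacles is flat with geodesically convex metric, the function $t\mapsto d(\tilde\delta_1(t),\tilde\delta_2(t))$ is convex, and the length of $m$ is at most $\tfrac12(\mathrm{len}(\delta_1)+\mathrm{len}(\delta_2)) = \mathrm{len}(\delta_1)$, with equality only if $\tilde\delta_1$ and $\tilde\delta_2$ bound a flat strip — which, again, cannot happen with these obstacles unless they coincide. Since $m$ is homotopic to $\delta_1$ and no shorter, and $\delta_1$ is \emph{locally} shortest, a limiting/rescaling argument (shrinking the two curves toward each other) upgrades ``no strictly shorter curve nearby'' to ``$\delta_1 = \delta_2$.''

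Concretely, rather than invoking CAT$(0)$ machinery wholesale, I would give a hands-on version: lift to the universal cover $\widetilde X$, which is a simply connected flat surface with cone points of angle $\ge 2\pi$ at obstacles (and boundary with angles $\ge \pi$ at polygonal obstacle corners facing the free side). In such a surface any two points are joined by a unique geodesic, and geodesic bigons are impossible; the midpoint-curve construction then works verbatim, with convexity of $t \mapsto d(\tilde\delta_1(t),\tilde\delta_2(t))$ following from the flat-strip / comparison inequality for triangles with a cone point. The main obstacle is making the ``equality case'' watertight: I need that $\mathrm{len}(m) = \mathrm{len}(\delta_1)$ forces $\tilde\delta_1 \equiv \tilde\delta_2$, which requires ruling out a genuine flat strip between them, and then transferring a global-minimality-style conclusion from the merely \emph{local} minimality hypothesis. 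I would handle this by first proving that a locally shortest curve is in fact globally shortest in its homotopy class — using the same midpoint construction applied to $\delta_1$ and an arbitrary homotopic competitor, shrinking the competitor toward $\delta_1$ along a homotopy to reach the scale at which local minimality applies, and getting a contradiction if the competitor were strictly shorter — after which uniqueness of the global minimizer follows immediately from strict convexity. Everything else (compactness for existence, the reduction of loops to paths) is routine.
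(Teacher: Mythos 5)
Your proposal is correct in outline but takes a genuinely different route from the paper. For uniqueness the paper stays entirely inside the triangulation picture: given two distinct locally shortest representatives $\delta_1, \delta_2$, they necessarily have the same reduced edge sequence $\mathcal E$, and one forms the linear interpolant $\delta^*$ whose vertex on each edge $e\in\mathcal E$ is $(1-\eps)x_{e,1}+\eps x_{e,2}$. An elementary per-triangle segment inequality then gives $|\delta^*|\le (1-\eps)|\delta_1|+\eps|\delta_2|$, and letting $\eps\to 0$ produces homotopic curves arbitrarily close to $\delta_1$ but no longer, contradicting strict local minimality. You instead pass to the universal cover, argue it is CAT$(0)$ (flat with cone angles $\ge 2\pi$, reflex boundary at obstacle corners), and run the midpoint-curve and flat-strip arguments there. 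These are two incarnations of the same convexity-of-length principle, but the paper's version is self-contained and sidesteps exactly the bookkeeping you flag as delicate: verifying CAT$(0)$ at (possibly nonconvex) polygonal corners, ruling out genuine flat strips, and upgrading ``locally shortest'' to ``globally shortest.'' In your write-up that last step needs to be carried out explicitly, e.g.\ by iterating the midpoint construction between $\delta_1$ and a strictly shorter competitor until the midpoint lands within $\eps$ of $\delta_1$; as stated, ``shrinking the two curves toward each other'' is only a gesture at this. Your abstract formulation does buy generality — it would transfer to other nonpositively curved surfaces, which dovetails with the paper's closing speculation about extending HCS — and you also give an explicit compactness argument for existence, whereas the paper derives existence from termination of the vertex-release algorithm (Lemma~\ref{lemma_alg}) rather than from this lemma.
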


\begin{proof}
The basic idea is that length function on the space of piecewise-linear curves is convex.

We use the triangulation approach of Section~\ref{subsec_triang}. Suppose for a contradiction that $\delta_1$ and $\delta_2$ are two different locally shortest curves or paths homotopic to $\gamma$. Clearly, their edge sequences must be reduced, so we have $\mathcal E(\delta_1)=\mathcal E(\delta_2)$. Denote this common edge sequence by $\mathcal E$. All vertices of $\delta_1, \delta_2$ must lie on edges of $\mathcal E$. For each $e\in\mathcal E$, let $x_{e,1}, x_{e,2}$ be the vertex of $\delta_1, \delta_2$ on $e$, respectively. Say the lengths of the curves satisfy $|\delta_1|\ge |\delta_2|$.

\begin{figure}
\centerline{\includegraphics{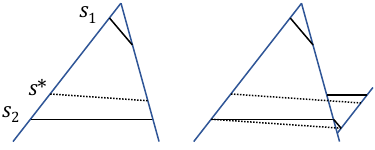}}
\caption{\label{fig_midsegment}Proof of an inequality involving the lengths of three segments.}
\end{figure}

Given $\eps>0$, for each $e\in\mathcal E$ define the point $x^*_e = (1-\eps) x_{e,1} + \eps x_{e,2}$. Let $\delta^*$ be the curve or path that uses the sequence of points $x^*_e$ as vertices. Clearly, $\delta^*$ is homotopic to $\gamma$. For each two adjacent edges $e_1, e_2\in \mathcal E$, the corresponding curve segments $s_1, s_2, s^*$ satisfy $|s^*| + (\eps/(1-\eps))|s^*| \le |s_1| + (\eps/(1-\eps))|s_2|$ (see Figure~\ref{fig_midsegment}), which implies $|s^*|\le (1-\eps)|s_1| + \eps|s_2|$. Hence, $|\delta^*| \le (1-\eps)|\delta_1|+\eps|\delta_2| \le |\delta_1|$. This is true for all $\eps>0$, contradicting the assumption that $\delta_1$ is locally shortest.
\end{proof}

\section{The two windings lemma}\label{app_windings}

The following fact that was stated and proven in \cite{hs94} in a somewhat different context. We include the proof for completeness.

\begin{lemma}
Let $P$ be a finite set of obstacle points. Let $\mathcal T$ be a triangulation of $P$. Let $\gamma$ be a piecewise-linear curve that avoids $P$, such that its edge sequence $\mathcal E(\gamma)=(e_0, \ldots, e_{m-1})$ is reduced (meaning, $e_i\neq e_{i+1}$ for all $i$). Suppose that there exists $k\le m/2$ such that $e_i = e_{i+k}$ for all $0\le i<k$. (Meaning, $\gamma$ winds twice in a row along a certain sequence of edges, and then does some other stuff before returning to its starting point.)

Let $\delta$ be the shortest curve homotopic to $\gamma$. Let $(x_0, \ldots, x_{m-1})$ be the vertices of $\delta$, with $x_i\in e_i$ for each $0\le i<m$. Then there exists some $0\le i <k$ for which $x_i=x_{i+k}$.
\end{lemma}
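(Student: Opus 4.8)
The plan is to argue by contradiction: assume $x_i \neq x_{i+k}$ for every $0 \le i < k$, and derive a contradiction with the fact that $\delta$ is the \emph{shortest} curve homotopic to $\gamma$. The key observation is that, because the edge subsequences $(e_0, \ldots, e_{k-1})$ and $(e_k, \ldots, e_{2k-1})$ are identical, the two corresponding portions of $\delta$ — call them $\delta^{(1)}$ (the polyline $x_0 x_1 \cdots x_{k-1} x_k$) and $\delta^{(2)}$ (the polyline $x_k x_{k+1} \cdots x_{2k-1} x_{2k}$) — are two polygonal paths that cross exactly the same ordered list of triangulation edges. So I want to set up a "mix-and-match" or averaging argument: if the two paths are not identical, then we can shorten the curve by replacing both of them with an averaged path, contradicting minimality. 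This is exactly the convexity-of-length idea already used in Appendix~\ref{app_unique} (Lemma~\ref{lemma_unique_shortest}), so I would lean on that machinery.

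First I would make precise what goes wrong if we naively average. The issue is that $\delta$ is a single closed curve, and $\delta^{(1)}$, $\delta^{(2)}$ are two \emph{consecutive} sub-paths sharing the endpoints $x_{k-1}x_k$-junction region; they are not two separate curves, so the straightforward convexity argument of Lemma~\ref{lemma_unique_shortest} does not apply verbatim. Instead, I would define, for a parameter $\lambda \in [0,1]$, a new curve $\delta_\lambda$ which keeps the edge sequence $\mathcal E(\delta)$ but places its vertex on $e_i$ at the point $(1-\lambda)x_i + \lambda x_{i+k}$ for $0 \le i < k$, at $(1-\lambda)x_{i} + \lambda x_{i-k}$ for $k \le i < 2k$, and at $x_i$ for $2k \le i < m$. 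For this to be well-defined I need $e_{i+k} = e_i$ (given) and I need the "tail" vertices $x_{2k}, \ldots, x_{m-1}$ together with $x_0$ to still connect up; the endpoints of the modified block are $x_0$ (fixed, since $i=0$ gives $(1-\lambda)x_0 + \lambda x_k$ — wait, this already moves $x_0$), so I must be a little more careful: the cleanest version is to shift the roles so that the block of indices being averaged is $\{0, \ldots, 2k-1\}$ but the \emph{junction} vertices $x_0$ and $x_{2k}(=x_0$ cyclically if $m = 2k$, otherwise a genuine tail vertex) are what constrain things. I would handle the two cases $m = 2k$ and $m > 2k$ separately, or — more elegantly — observe that it suffices to prove the statement after possibly rotating the indices so that the inequality we extract is clean.

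The heart of the argument, and the step I expect to be the main obstacle, is showing $|\delta_\lambda| < |\delta|$ for small $\lambda > 0$ under the assumption that $x_i \ne x_{i+k}$ for some $i$. Each segment of $\delta_\lambda$ lying in a triangle $T$ connects a point on one edge of $T$ to a point on another; by the three-segments inequality illustrated in Figure~\ref{fig_midsegment} (strict convexity of the length functional unless all three segments are collinear/equal), the length of the averaged segment is at most the corresponding convex combination of the two original segment lengths, with strict inequality somewhere because $\delta^{(1)} \neq \delta^{(2)}$. Summing over all triangle-segments of the two averaged blocks and using that the tail is unchanged, I get $|\delta_\lambda| \le (1-\lambda)|\delta| + \lambda \cdot (\text{length of } \delta \text{ with the two blocks swapped})$, and the swapped curve has the same length as $\delta$ (it is just a reparametrization-level relabeling — the multiset of segments is the same), hence $|\delta_\lambda| \le |\delta|$ with strict inequality as soon as $\lambda \in (0,1)$, provided $\delta^{(1)} \ne \delta^{(2)}$. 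The one subtlety to nail down is that $\delta_\lambda$ is genuinely homotopic to $\delta$ (it has the same reduced edge sequence, so this is immediate from Section~\ref{subsec_triang}) and that swapping the two blocks does not change homotopy class or length — both are clear once one notes $e_{i+k}=e_i$. This contradicts local shortestness of $\delta$, so $x_i = x_{i+k}$ for all $0 \le i < k$, which is even stronger than the stated conclusion; in particular it holds for some $i$, as claimed.
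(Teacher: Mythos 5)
Your approach is genuinely different from the paper's — you attempt an averaging/mix-and-match argument leaning on convexity of length, whereas the paper's proof is a direct geometric pinning argument: it fixes a coordinate frame in which $e_{k-1}$ is horizontal, takes $p^*$ to be the highest among the lower endpoints of $e_0,\ldots,e_{k-1}$, and argues that each winding of the shortest curve must touch $p^*$ (otherwise the curve rises unnecessarily high above the horizontal line through $p^*$ and can be shortened), so $x_i = x_{i+k} = p^*$ for the index $i$ of that edge. Your route, if it closed, would even give the stronger conclusion $x_i = x_{i+k}$ for \emph{all} $i < k$; but it does not close.

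The gap is exactly the junction issue you flag and then wave away. For $m > 2k$, the ``swapped'' curve does \emph{not} have the same length as $\delta$: its edge-segment multiset differs from that of $\delta$ in exactly the three junction segments. The original curve contains $x_{m-1}x_0$, $x_{k-1}x_k$, and $x_{2k-1}x_{2k}$, while the swapped curve contains instead $x_{m-1}x_k$, $x_{2k-1}x_0$, and $x_{k-1}x_{2k}$. There is no reason for $|x_{m-1}x_k|+|x_{2k-1}x_0|+|x_{k-1}x_{2k}|$ to equal $|x_{m-1}x_0|+|x_{k-1}x_k|+|x_{2k-1}x_{2k}|$. Since $\delta$ is shortest and the swap is homotopic, all you can deduce a priori is $|\delta_{\mathrm{swap}}|\ge|\delta|$, which is the \emph{wrong} direction: it makes the convexity bound $|\delta_\lambda|\le(1-\lambda)|\delta|+\lambda|\delta_{\mathrm{swap}}|$ potentially $\ge|\delta|$, yielding no contradiction. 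Rotating indices cannot remove the junction, because the double-wound block has fixed attachment points $x_{m-1}$ and $x_{2k}$ to the rest of the curve, and averaging vertex $0$ toward vertex $k$ (or any cyclic shift thereof) necessarily perturbs one of the three attaching segments. The argument is clean only in the degenerate case $m=2k$, where the swap is a pure cyclic relabeling. A secondary, smaller issue: even where the convex combination bound does hold, strictness is not automatic — if the two windings happened to be translates of one another, every triangle inequality you invoke would be an equality — so you would still need an extra argument to rule that out, whereas the paper's proof sidesteps both issues entirely by exhibiting the common obstacle vertex directly.
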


\begin{figure}
\centerline{\includegraphics{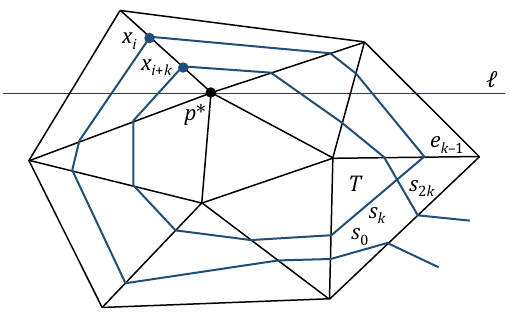}}
\caption{\label{fig_two_windings} Proof of the two windings lemma.}
\end{figure}

\begin{proof}
For each $i$, let $s_i$ be the segment $s_i=x_{i-1}x_i$ of $\delta$. Note that the three segments $s_0,s_k,s_{2k}$ lie in the same triangle $T\in\mathcal T$. Suppose without loss of generality that the edge $e_{k-1} = e_{2k-1}$ is horizontal and $T$ lies below it. For each $i$, let $p_i\in P$ be the lower endpoint of the edge $e_i$ (breaking ties arbitrarily). Let $p^*$ be the highest point among $\{p_0, \ldots, p_{k-1}\}$. Let $\ell$ be a horizontal line through $p^*$. Then the part of $\gamma$ between $e_0$ and $e_{k-1}$ crosses $\ell$ upwards, then crosses at least one non-horizontal edge $e_i$ that has $p_i=p^*$, and then crosses $\ell''$ downwards. Hence, $\gamma$ can be shortened by not rising above $\ell''$, and therefore, in $\delta$ we have $x_i = p_i=p^*$. The same argument yields $x_{i+k} = p_{i+k}=p^*$. See Figure~\ref{fig_two_windings}.
\end{proof}

\section{Implementation details}\label{app_implementation}

Following~\cite{gp_acsf}, we simulated ACSF on the curve $\Delta$ of Section~\ref{subsec_experiments} using a simple front-tracking approach: We initially sampled $m=1000$ points uniformly spaced along $\Delta$. For each such point $p_i$, we estimate its normal vector and radius of curvature by the normal vector $v_i$ and radius $r_i$ of the unique circle passing through points $p_{i-1}, p_i, p_{i+1}$. We simultaneously let all points move at the appropriate speeds for a short time interval $t=\min{\{c\cdot (d_{\mathrm{min}})^{4/3}, t_{\mathrm{min}}\}}$, where $d_{\mathrm{min}}$ is the minimum distance between two consecutive points, and $c=3\cdot 10^{-4}$ and $t_{\mathrm{min}}=3\cdot 10^{-9}$ are fixed parameters. Hence, as the minimum distance between points decreases, we take smaller time steps, except that we fix a minimum time step $t_{\mathrm{min}}$ in order to be able to go pass the singularity at which the self-intersection disappears. In order to prevent the points $p_i$ from bunching together at sharp bends of the curve, each point $p_i$ is also given a tangential velocity that tends to move it away from its closer neighbor among $p_{i-1},p_{i+1}$. These tangential velocities should not affect the evolution of the flow, since they only cause the curve points to move within the curve.

We simulated HCS on uniform-grid obstacles and on random obstacles using two different programs. Both programs use the vertex-release method described in Section~\ref{sec_vrelease}. The random-obstacle program stores points in a quadtree, in order to be able to answer triangle-containment queries efficiently. The random-obstacle program does not handle nailed obstacles, since in theory they occur with probability zero.

The uniform-grid program is more memory-efficient than the random-obstacle program, since it does not need to store all obstacles in memory, but rather takes the obstacle set to be $\Z^2$. Vertex releases on the curve $\gamma$ are performed as follows: Let $p,q,r$ be three consecutive obstacles along $\gamma$, where $q$ is to be released, and assume these three points do not lie on the same line. Note that the vectors $q-p$, $r-q$ are primitive (i.e.~each one has relatively prime coordinates). Let $z_1, \ldots, z_k$ be the points that should replace $q$ after the vertex release. Denote $z_0=p$, $z_{k+1}=r$. Then each of the triangles $z_i q z_{i+1}$ has area $1/2$. Hence, given $z_i$ and $q$, we can find $z_{i+1}$ by an application of the extended Euclidean gcd algorithm, plus some additional calculations. We leave the details to the reader.

Our code is available at the following links:

\url{https://github.com/savvakumov/ACSF-simulation}

\url{https://github.com/savvakumov/HCS-with-random-obstacles}

\url{https://github.com/savvakumov/HCS-square-grid}

\end{document}